\newcommand{\eps}{\varepsilon}
\newtheorem{definition}{Definition}[section] 
\newtheorem{Definition}{Definition}[section] 
\renewcommand{\paragraph}[1]{\medskip\noindent{\bf #1}}
\newcommand{\abs}[1]{\left| #1 \right|}
\newcommand{\disc}{\mathrm{disc}}
\newcommand{\Acal}{\mathcal{A}}
\newcommand{\Rcal}{\mathcal{R}}
\newcommand{\Zcal}{\mathcal{Z}}
\newcommand{\Tbf}{\mathbf{T}}
\newcommand{\Zbf}{\mathbf{Z}}
\newcommand{\Tcal}{\mathcal{T}}
\newcommand{\Pcal}{\mathcal{P}}
\newtheorem{theorem}{Theorem}[section]
\newtheorem{lemma}{Lemma}[section]
\newtheorem{corollary}{Corollary}[section]
\newtheorem{fact}{Fact}[section]
\numberwithin{equation}{section}
\begin{document}
\markboth{Z. Wei and K. Yi}{The Space Complexity of 2-Dimensional Approximate Range
    Counting and Combinatorial Discrepancy}

\title{The Space Complexity of 2-Dimensional Approximate Range
    Counting and Combinatorial Discrepancy\thanks{A
    preliminary version of the paper appeared in SODA'13.}}

\author{Zhewei Wei \thanks{School of Information, Renmin University of
    China. zhewei@ruc.edu.cn} 
Ke Yi \thanks{Hong Kong University of Science and Technology. yike@cs.ust.hk}
}
\maketitle
\begin{abstract} \small
We study the problem of $2$-dimensional orthogonal range counting with
additive error. Given a set $P$ of $n$ points drawn from an $n\times n$
grid and an error parameter $\eps$, the goal is to build a data structure,
such that for any orthogonal range $R$, it can return the
number of points in $P\cap R$ with additive error $\eps n$. A well-known
solution for this problem is the {\em $\eps$-approximation}, which
 is a subset $A\subseteq P$ that
can estimate the number of points in $P\cap R$ with the
number of points in $A\cap R$.  It is known that an $\eps$-approximation of
size $O(\frac{1}{\eps} \log^{2.5} \frac{1}{\eps})$ exists for any $P$ with
respect to orthogonal ranges, and the best lower bound is
$\Omega(\frac{1}{\eps} \log \frac{1}{\eps})$.

The $\eps$-approximation is a rather restricted data structure, as we are
not allowed to store any information other than the coordinates of the
 points in $P$.  In this paper, we explore what can be achieved without
any restriction on the data structure. We first describe a simple data structure
that uses $O(\frac{1}{\eps}(\log^2\frac{1} {\eps} +  
\log n) )$ bits and answers queries with error $\eps n$. We then prove a
lower bound that any data structure that answers queries with error
$\eps n$ must use $\Omega(\frac{1}{\eps}(\log^2\frac{1} {\eps} +  
\log n) )$ bits. Our lower bound is information-theoretic: We show that there is a
collection of $2^{\Omega(n\log n)}$ point sets with large {\em union
  combinatorial discrepancy}, and thus are hard to distinguish unless
we use $\Omega(n\log n)$  bits. 
\end{abstract}

\section{Introduction}
Range counting is one of the most fundamental problems in computational
geometry and data structures. Given $n$ points in $d$ dimensions, the goal
is to preprocess the points into a data structure, such that the number of
points in any query range can be returned.  Range counting has been studied
intensively, and a lot of work has focused on the space-query time tradeoff
or the update-query tradeoff of the data structure. We refer the reader to the
survey by Agarwal and Erickson~\cite{ae-grsr-97} for these results.  In this paper, we
look at the problem from a data summarization/compression point of view:
What is the minimum amount of space that is needed to encode all the range
counts approximately?  Approximation is necessary here, since otherwise we
will have to remember the entire the point set.  It is also easy to see
that relative approximation will not help either, as it requires us to
differentiate between empty ranges and those containing only one point.
Thus, we aim at an absolute error guarantee.  As we will be dealing with
bit-level space complexity, it is convenient to focus on an integer grid.
More formally, we are given  a set of $n$ points $P$ drawn from an
$n \times n$ grid and an error parameter $\eps$ . The goal is to build a
data structure, such that for any orthogonal range $R$, the data structure
can return the number of points in $P\cap R$ with additive error $\eps n$.
 
We should mention that there is another notion of approximate range
counting that approximates the range, i.e., points near the boundary of the
range may or may not be counted \cite{arya06}.  Such an approximation
notion clearly precludes any sublinear-space data structure as well.

\subsection{Background and related results}
$\;$

\paragraph{$\eps$-approximations.} Summarizing point sets while preserving
range counts (approximately) is a fundamental problem with applications in
numerical integration, statistics, and data mining, among many others.  The
classical solution is to use the {\em $\eps$-approximation} from
discrepancy theory.  Consider a range space $(P,\Rcal)$, where $P$ is a
finite point set of size $n$. A subset $A \subseteq P$ is called an
$\eps$-approximation of $(P, \Rcal)$ if
$$\max_{R\in \Rcal}\abs{\frac{\abs{R\cap A}}{\abs{A}}-\frac{\abs{R\cap
      P}}{\abs{P}}}\le \eps. $$
This means that we can approximate $|R\cap
P|$ by counting the number of points in $R\cap A$ and scaling back, with
error at most $\eps n$.

Finding $\eps$-approximations of small size for various geometric range spaces has been a
central research topic in computational geometry. Please see the books by Matousek
\cite{matousek:discrepency} and Chazelle \cite{chazelle00:_discr} for a comprehensive
coverage on this topic.  Here we only review the most relevant results, i.e., when the
range space is the set of all orthogonal rectangles in $2$ dimensions, which we denote as
$\Rcal_2$.  This question dates back to Beck \cite{beck1981balanced}, who showed that
there are $\eps$-approximations of size $O(\frac{1}{\eps}\log^4\frac{1}{\eps})$ for any
point set $P$.  This was later improved to
$O\left(\frac{1}{\eps}\log^{2.5}\frac{1}{\eps}\right)$ by Srinivasan
\cite{srinivasan1997improving}.  These were not constructive due to the use of a
non-constructive coloring with combinatorial discrepancy $O(\log^{2.5} n)$ for orthogonal
rectangles.  Recently, Bansal~\cite{bansal2010constructive} and
Lovett et al.~\cite{lovett2012constructive} proposed algorithms to
construct such a coloring, and therefore has made these results constructive.  On the
lower bound side, it is known that there are point sets that require $\eps$-approximations
of size $\Omega(\frac{1}{\eps}\log\frac{1}{\eps})$ \cite{beck1981balanced}.

\paragraph{Combinatorial discrepancy.} 
Given a range space $(P,\Rcal)$ and a coloring function $\chi:P\rightarrow
\{-1,+1\}$, we define the {\em discrepancy} of a range $R \in \Rcal$ under $\chi$ to
be
 $$\chi(P\cap R)=\sum_{p\in P\cap R} \chi(p).$$
The discrepancy of the range space $(P,\Rcal)$ is defined as
$$\disc(P,\Rcal)=\min_{\chi}\max_{R\in \Rcal} \abs{\chi(P\cap R)},$$
namely, we are looking at the coloring that minimizes the color difference
of any range in $\Rcal$.  This kind of discrepancy is called {\em
  combinatorial discrepancy} or sometimes {\em red-blue discrepancy}. Taking the maximum over all point sets of size
$n$, we say that the combinatorial discrepancy of $\Rcal$ is
$\disc(n,\Rcal)=\max_{\abs{P}=n}\disc(P,\Rcal)$.

There is a close relationship between combinatorial discrepancy and $\eps$-approximations,
as observed by Beck \cite{beck1981balanced}.  For orthogonal ranges, the relationship is
particularly simple: The combinatorial discrepancy is at most $t(n)$ if and only if there
is an $\eps$-approximation of size $O(\frac{1}{\eps}t(\frac{1}{\eps}))$.  In fact, all the
aforementioned results on $\eps$-approximations follow from the corresponding results on
combinatorial discrepancy.  So the current upper bound on the combinatorial discrepancy of
$\Rcal_2$ is $O(\log^{2.5} n)$ \cite{srinivasan1997improving}.  The lower bound is
$\Omega(\log n)$ \cite{beck1981balanced}, which follows from the Lebesgue discrepancy
lower bound (see below).  Closing the $\Theta(\log^{1.5}n)$ gap between the upper and the
lower bound remains a major open problem in discrepancy theory.  For orthogonal ranges in
$d\ge 3$ dimensions, the current best upper bound is $O(\log^{d+1/2} n)$ by Larsen
\cite{larsen11:_range}, while the lower bound is $\Omega((\log
n)^{d-1})$ , which is recently proved by Matou{\v{s}}ek and Nikolov~\cite{matouvsek2015combinatorial}.

\paragraph{Lebesgue discrepancy.}
Suppose the points of $P$ are in the unit square $[0,1)^2$.  The Lebesgue
discrepancy of $(P, \Rcal)$ is defined to be
$$D(P,\Rcal)=\sup_{R\in \Rcal} \abs{\abs{P\cap R}-\abs{R\cap [0,1)^2}}.$$
The Lebesgue discrepancy describes how uniformly the point set $P$ is
distributed in $[0,1)^2$. Taking the infimum over all point sets of size
$n$, we say that the Lebesgue discrepancy of $\Rcal$ is 
$D(n,\Rcal)=\inf_{\abs{P}=n}D(P,\Rcal)$. 

The Lebesgue discrepancy for $\Rcal_2$ is known to be $\Theta(\log n)$.  The lower bound
is due to Schmidt~\cite{schmidt1972irregularities}, while there are many point sets (e.g.,
the Van der Corput sets~\cite{van1936verteilungsfunktionen} and the $b$-ary
nets~\cite{sobol1967distribution}) that are proved to have $O(\log n)$ Lebesgue
discrepancy.  It is well known that the combinatorial discrepancy of a range space cannot
be lower than its Lebesgue discrepancy, so this also gives the $\Omega(\log n)$ lower
bound on the combinatorial discrepancy of $\Rcal_2$ mentioned above.

\paragraph{$\eps$-nets.}
For a range space $(P,\Rcal)$, a subset $A\subseteq P$ is called an
$\eps$-net of $P$ if for any range $R\in \Rcal$ that satisfies $\abs{P\cap
  R}\ge \eps n$, there is at least $1$ point in $A\cap R$. Note that an
$\eps$-approximation is an $\eps$-net, but the converse may not be
true. 

For a range space $(P,\Acal)$, Haussler and Welzl~\cite{hw-ensrq-87} show that if the
range space has finite VC-dimension $d$, there exists an $\eps$-net of size
$O(\frac{d}{\eps} \log \frac{d}{\eps})$. For $\Rcal_2$, the current best construction is
due to Aronv, Ezra and Sharir~\cite{aronov2010small}, which has size $O(\frac{1}{\eps}
\log \log \frac{1}{\eps})$. A recent result by Pach and Tardos~\cite{pach2013tight} shows
that this bound is essentially optimal.  For more results on $\eps$-nets, please refer to
the book by Matou{\v{s}}ek~\cite{matousek:discrepency}.  In this paper, our data structure
will be based an {\em $\eps$-net} for $\Rcal_2$.

\paragraph{Approximate range counting data structures.}
The $\eps$-approximation is a rather restricted data structure, as we are
not allowed to store any information other than the coordinates of a subset
of points in $P$. In this paper, we explore what can be achieved without
any restriction on the data structure.  In 1 dimension, there is nothing
better: An $\eps$-approximation has size $O(\frac{1}{\eps})$, which takes
$O(\frac{1}{\eps} \log n)$ bits.  On the other hand, simply consider the
case where the $n$ points are divided into groups of size $\eps n$, where
all points in each group have the same location.  There are $n^{1/\eps}$
such point sets and the data structure has to differentiate all of them.
Thus $\log(n^{1/\eps}) = \frac{1}{\eps} \log n$ is a lower bound on the
number of bits used by the data structure.

Finally, we remark that there are also other work on approximate range
counting with various error measure, such as relative
$\eps$-approximation~\cite{har2011relative}, relative error data
structure~\cite{afshani2009approximate,aronov2010approximate}, and
absolute error model~\cite{arya06}. These error measures are different
from ours, and it is not clear if these problems admit sublinear space solutions.

\subsection{Our results} 
This paper settle the following problem: How many bits do we need to
encode all the orthogonal range counts with
additive error $\eps n$ for a point set on the plane?  
We first show that if we are allowed to store extra information
other than the coordinates of the points, then there is a data structure that uses
$O(\frac{1}{\eps}(\log^2\frac{1} {\eps} +  \log n) )$ bits.  
This is a $\Theta(\log^{1.5}\frac{1}{\eps})$ improvement from
$\eps$-approximations.  

The majority of the paper is the proof of a matching lower bound:  
We show that for $\eps \ge c\log n /n$ for some constant $c$,
any data structure that answers queries with error $\eps n$ must use
$\Omega(\frac{1}{\eps}(\log^2\frac{1} {\eps} +  \log n) )$  bits. In
particular, if we set $\eps = c\log n /n$, then any data structure that answers queries with error $\eps n$
must use $\Omega(n \log n)$ bits, which implies that that answering
queries with error $O(\log n)$ is as hard as answering the queries exactly.

The core of our lower bound proof  is the construction of a
collection $\Pcal^*$ of $2^{\Omega(n\log n)}$ point sets with large
{\em union combinatorial discrepancy}.  More precisely, we show
that the union of any two point sets in $\Pcal^*$ has high combinatorial
discrepancy, i.e., at least $c\log n$.  Then, for any two point sets $P_1,
P_2 \in \Pcal^*$, if $\disc(P_1 \cup P_2, \Rcal_2)\ge c\log n$, that means
for any coloring $\chi$ on $P_1\cup P_2$, there must exist a rectangle $R$
such that $|\chi(R)| \ge c\log n$.  Consider the coloring $\chi$ where
$\chi(p)=1$ if $p\in P_1$ and $\chi(p)=-1$ if $p \in P_2$.  Then there
exists a rectangle $R$ such that $|\chi(R)| = \abs{\abs{R\cap P_1}-\abs{R
    \cap P_2}}\ge c \log n$.  This implies that a data structure that
answers queries with error $\frac{c}{2}\log n$ have to distinguish $P_1$
and $P_2$.  Thus, to distinguish all the $2^{\Omega(n\log n)}$ point sets in
$\Pcal^*$, the data structure has to use at least $\Omega(n\log n)$
bits, which is a tight lower bound for $\eps = n/\log n$. We will show
how the combinatorial discrepancy bound implies tight lower bound for
arbitrary $\eps$ in Section~\ref{sec:lower_bound}.

While point sets with low Lebesgue discrepancy or high combinatorial
discrepancy have been extensively studied, here we have constructed a large collection
of point sets in which the pairwise union has high combinatorial
discrepancy.  This particular aspect appears to be novel, and our
construction could be useful in proving other space lower bounds.
It may also have applications in situations where we need a ``diverse''
collection of (pseudo) random point sets.

\section{Upper Bound}
\label{sec:data_structure}
In this section, we build a data structure that supports approximate range
counting queries. Given a set of $n$ points on an $n \times n$ grid, our
data structure uses $O(\frac{1}{\eps}(\log^2\frac{1} {\eps} +
\log n) $
 bits and answers an orthogonal range counting query
with error $\eps n$. We note that it is sufficient to only consider
two-sided ranges, since an $4$-sided range counting query can be expressed as a linear
combination of four two-sided range counting queries by the inclusion-exclusion
principle.  A
two-sided range is specified by a rectangle of the form $[0,x)\times [0,y)$, where
$(x,y)$ is called the {\em query point}.

\paragraph{The data structure.}
Our data structure is an approximate variant of Chazelle's 
linear-space version of the range tree, originally for exact 
orthogonal range counting~\cite{chazelle1988functional}. 
Consider a set $P$ of $n$ points on an $n\times n$ grid. 
We divide $P$ into the left point set $P_L$ and the right point set
$P_R$ by the median of the $x$-coordinates. We will recursively build a data structure 
for $P_L$ and $P_R$. Let $B$ be a parameter to be determined later.
Let $Q(P)$ denote the ${n\over B}$ quantiles of the $y$-coordinates of
$P$. Note that  the $i$-th
quantile is the $y$-coordinate in $P$ with exactly $iB$
points below it.
We use indices $[{n\over B}] = 1, \ldots, {n\over B}$ to represent $Q(P)$, where $i$
denote the $i$-th quantile.
We don't explicitly store the $y$-values or even the indices of $Q(P)$. Instead, 
for each index $i$ in $Q(P)$ with coordinate $y$, we store a pointer
to the successor of $y$ in $Q(P_L)$. 
Note that these ${n\over B}$ pointers form 
a monotone increasing sequence of ${n\over B}$ indices in $[{n \over 2B}]$, and 
can be encoded in $O({n\over B})$ bits. Similarly, we store the
successor pointers from
$Q(P)$ to $Q(P_R)$ with $O({n\over B})$ bits.
It follows that the space in bits satisfies recursion $S(n) = 2S({n \over 2})+O({n\over B})$, with base case $S(B)=0$. 
The recurrence solves to $S(n) = O({n\over B}\log{n\over B})$. Finally, we
explicitly store the ${1\over \eps}$ quantiles $Q_0(P)$ for the
$y$-coordinates of $P$ with $O({1 \over \eps}
\log n)$ bits.

Given a query $q=(q.x, q.y)$. For simplicity, we assume $q.y$ is in
$Q_0(P)$. If not, we can use the successor of $q.y$ in $Q_0(P)$ as an
estimation with additive error at most $\eps n$ to
the final count.  If $q$ is in $P_L$,
we follow the pointer to find the successor of $q.y$ in $Q.L$, and the
recurse the problem in $P_L$. If $q$ is in $P_r$, we first  
follow the pointer to get the successor of $q.y$ in $Q(P_L)$. This gives 
an approximate count for $P_L \cap q$ with additive error $B$. We then
follow the pointer to get the successor of $q.y$ in $Q(P_R)$,  
and recurse the problem in $P_R$. Note that  
rounding $q.y$ with the successor in $P_R$ or $R_L$ causes additive
error $B$, and using the approximate count for $P_L\cap q$ also causes
additive error $B$.   
Thus, the overall additive error satisfies $E(n) = E({n \over 2})+2B$, with base case $E(B)=B$. 
The recurrence solves to $E(n) = O(B \log{n\over B})$, and 
we can then set $B=\eps n/\log{1 \over \eps}$ to make $E(n)=O(\eps n)$. 
It follows that $S(n) = O({1 \over \eps}\log^2{1\over \eps})$, and
thus total space usage is
$O(\frac{1}{\eps}(\log^2 \frac{1}{\eps} + 
\log n))$ bits.  
The query time can also be made $O(\log{1 \over \eps})$, if we use 
succinct rank-select structures to encode the pointers, as 
in Chazelle's method.

\begin{theorem}
\label{thm:upper_bound}
Given a set of $n$ points drawn from an $n \times n$ grid, there is a data structure
that uses $O(\frac{1}{\eps}(\log^2 \frac{1}{\eps} +
\log n))$ bits and answers orthogonal range counting query with additive error $\eps n$.
\end{theorem}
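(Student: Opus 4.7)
The proof assembles the pieces already developed in the section. First I would reduce to two-sided queries: by inclusion–exclusion, any four-sided rectangle is a signed sum of four two-sided queries $[0,x)\times[0,y)$, so answering two-sided queries with error $\eps n/4$ yields four-sided error at most $\eps n$; this loses only a constant factor in the final bound.

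Next I would invoke the Aronov--Ezra--Sharir construction to build an $\eps'$-net $P_{\eps'}$ of size $m = O(\frac{1}{\eps'}\log\log\frac{1}{\eps'})$, attach to each net point $u_i$ a $\log m$-bit array $C_i$ of $l$-indicators as described, and show the query procedure returns $Q = \sum_{R_{a,l}^{q_y}\in\Rcal_q}\sum_{u_i\in P_{\eps'}\cap R_{a,l}^{q_y}}\eps' n\cdot C_i(l)$. Decomposing the canonical rectangle $[0,x_{i_q})\times[0,q_y)$ into at most $\log m$ dyadic rectangles and applying Lemma~\ref{lem:dyadic} to each one shows that the additive error accumulated over $\Rcal_q$ is at most $\eps' n\cdot \log m$. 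The thin sliver $(u_{i_q},q_x)\times[0,q_y)$ contains no $\eps'$-net point, so by the $\eps'$-net property it contributes fewer than $\eps' n$ uncounted points. Summing gives total error $(\log m + 1)\eps' n$ for two-sided queries.

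To convert this into a clean $\eps n$ bound, I would rescale by setting $\eps' = \eps/\bigl(4(\log m+1)\bigr) = \Theta\bigl(\eps/\log\frac{1}{\eps}\bigr)$, so that four-sided queries are answered with error at most $\eps n$. The space cost is dominated by storing the coordinates of the $m$ net points, which uses $O(m\log n) = O\bigl(\frac{1}{\eps'}\log\log\frac{1}{\eps'}\log n\bigr)$ bits; the attached bit arrays contribute only $O(m\log m)$ more, which is absorbed. Substituting back the value of $\eps'$ yields $O\bigl(\frac{1}{\eps}\log\frac{1}{\eps}\log\log\frac{1}{\eps}\log n\bigr)$ bits, as claimed.

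There is essentially no hard step here; the main care is the bookkeeping in the rescaling, to confirm that the $\log\log(1/\eps')$ factor from the $\eps'$-net size and the $\log(1/\eps')$ factor absorbed into $\eps'$ combine cleanly and do not inflate the final expression. Writing $\log(1/\eps') = \Theta(\log(1/\eps))$ and $\log\log(1/\eps') = \Theta(\log\log(1/\eps))$ for $\eps$ small enough makes this explicit, and one verifies that both the space bound and the error bound scale correctly, completing the proof of Theorem~\ref{thm:upper_bound}.
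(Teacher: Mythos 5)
Your proposal follows essentially the same route as the paper's proof: reduce to two-sided queries by inclusion--exclusion, build an Aronov--Ezra--Sharir $\eps'$-net of size $m=O(\frac{1}{\eps'}\log\log\frac{1}{\eps'})$, attach a $\log m$-bit indicator array per net point, decompose $[0,x_{i_q})\times[0,q_y)$ into at most $\log m$ dyadic rectangles, apply Lemma~\ref{lem:dyadic} per rectangle plus one extra $\eps' n$ for the sliver to the right of $u_{i_q}$, and then rescale $\eps'=\Theta(\eps/\log\frac{1}{\eps})$ to absorb the $(\log m+1)$ blow-up. Your only deviation from the write-up is cosmetic: you make the circular dependence $\eps'=\eps/(4(\log m+1))$ explicit and then resolve it by noting $\log(1/\eps')=\Theta(\log(1/\eps))$, which is a slightly cleaner bookkeeping than the paper's but the same argument.
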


\section{Lower Bound}
\label{sec:lower_bound}
In this section, we prove a lower bound that matches the upper bound
in Theorem~\ref{thm:upper_bound}.

\begin{theorem}
\label{thm:lower_bound}
Consider a set of $n$ points drawn from an $n \times n$ grid.
A data structure that answers orthogonal range counting query with
additive error $\eps n$ for any point set must use $\Omega(\frac{1}{\eps}(\log^2 \frac{1}{\eps} +
\log n))$ bits.
\end{theorem}

To prove Theorem~\ref{thm:lower_bound}, we need the following theorem
on union discrepancy.

\begin{theorem}
\label{thm:point_sets} Let $\Pcal$ denote the collection of all $n$-point
sets drawn from an
$n\times n$ grid. There exists a constant $c$ and a sub-collection
$\Pcal^*\subseteq \Pcal $ of size $2^{\Omega(n\log n)}$, such that for
any two point sets $P_1, P_2 \in \Pcal^*$,  their union discrepancy $\disc(P_1\cup P_2, \Rcal_2) \ge c \log n$. 
\end{theorem}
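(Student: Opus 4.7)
The plan is to construct $\Pcal^*$ via a structured random family of candidate point sets and verify the pairwise union discrepancy through a Lebesgue-to-combinatorial translation, then extract $\Pcal^*$ from the candidate family by a Tur\'an-style independent-set argument on the ``bad pair'' graph. The main obstacle is that $\disc(P_1\cup P_2,\Rcal_2)$ is a minimum over \emph{all} colorings of $P_1\cup P_2$, so a single cleverly chosen coloring can make the combinatorial discrepancy small even when the Lebesgue discrepancy is large; Schmidt's $\Omega(\log n)$ Lebesgue bound therefore cannot be invoked as a black box.

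First I would fix a candidate family in which each $P$ has exactly one point per vertical column of the $n\times n$ grid, so each candidate is encoded by a function $f\colon\{0,\ldots,n-1\}\to\{0,\ldots,n-1\}$; this already gives $n^n=2^{n\log n}$ candidates, and any union $P_1\cup P_2$ from this family has at most two points per column and $|P_1\cup P_2|\le 2n$.

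Next, for a fixed pair $(P_1,P_2)$ and any coloring $\chi$ of $P=P_1\cup P_2$ with positive class $P_+=\chi^{-1}(+1)$, I would use the decomposition
\[
 \chi(R)\;=\;(2|P_+|-|P|)\,\mathrm{area}(R)+2\varepsilon_+(R)-\varepsilon(R),
\]
where $\varepsilon_+(R)=|P_+\cap R|-|P_+|\,\mathrm{area}(R)$ and $\varepsilon(R)=|P\cap R|-|P|\,\mathrm{area}(R)$. Evaluating at a rectangle $R^*$ achieving $|\varepsilon_+(R^*)|=D(P_+)=\Omega(\log n)$ (Schmidt, applied to $P_+$) yields $\max_R|\chi(R)|\ge 2D(P_+)-D(P)-\bigl|2|P_+|-|P|\bigr|$. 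Unbalanced colorings with $\bigl|2|P_+|-|P|\bigr|\ge c\log n$ are handled directly via $R=[0,n)^2$; for balanced colorings the bound delivers $\Omega(\log n)$ provided $D(P)\le(2c_0-c')\log n$, where $c_0$ is Schmidt's constant and $c'>0$.

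The hard part will be ensuring $D(P_1\cup P_2)\le(2c_0-c')\log n$ for almost every pair in the candidate family, because truly random subsets have Lebesgue discrepancy $\Theta(\sqrt{N\log N})\gg\log n$. I would therefore restrict the candidates to sets built from a low-Lebesgue-discrepancy seed (e.g., random digit-reversal or Van der Corput–type permutations of a fixed base construction), and prove via concentration plus a union bound over all $n^{2n}$ pairs that the fraction of pairs with $D(P_1\cup P_2)>(2c_0-c')\log n$ is at most $2^{(1-\delta)n\log n}$ for some $\delta>0$. A Tur\'an-type greedy extraction on the resulting bad-pair graph then delivers $|\Pcal^*|\ge 2^{\delta n\log n}=2^{\Omega(n\log n)}$.
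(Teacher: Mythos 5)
Your decomposition $\chi(R)=(2|P_+|-|P|)\,\mathrm{area}(R)+2\varepsilon_+(R)-\varepsilon(R)$ is algebraically correct, but the key step --- lower-bounding $\max_R|\chi(R)|$ by $2D(P_+)-D(P_1\cup P_2)-\bigl|2|P_+|-|P|\bigr|$ and then appealing to Schmidt for $D(P_+)$ --- has a quantitative gap that is not a technicality. Schmidt's theorem gives $D(P_+)\ge c_0\log|P_+|$ for a specific universal constant $c_0$, but the \emph{same} theorem gives $D(P_1\cup P_2)\ge c_0\log(2n)$, so in the best case your bound reads $\ge 2c_0\log n - D(P_1\cup P_2)$ and is vacuous unless you exhibit $2n$-point sets with $D(P_1\cup P_2)\le(2c_0-c')\log n$. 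That requires the optimal 2-D Lebesgue-discrepancy constant to be strictly less than $2c_0$, i.e.\ within a factor of $2$ of Schmidt's lower-bound constant. This is not known: the explicit low-discrepancy constructions (Van der Corput, digit nets) achieve constants that exceed the known Schmidt constant by more than a factor of $2$, and the true optimal constant is open. The difficulty is compounded because $P_+=\chi^{-1}(+1)$ is adversarial, so you would need Schmidt-type control over \emph{every} roughly balanced subset of $P_1\cup P_2$, and you must also keep $D(P_1\cup P_2)$ small, whereas taking the union of two independently scrambled low-discrepancy sets already doubles the discrepancy in the worst case. A concentration-plus-union-bound argument cannot close this; even a single witnessing pair is not known to exist.

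The paper never routes through Lebesgue discrepancy. It applies Roth's orthogonal-function (Riesz-product) method directly to $\disc(P_1\cup P_2,\Rcal_2)$, choosing the sign of each wavelet $f_k$ per canonical cell so that the first-order term of the Riesz product accumulates the corner-volume distance $\Delta(P_1,P_2)$ for \emph{every} coloring of $P_1\cup P_2$ (Lemma~\ref{lem:corner_volume_distance}); this makes the bound self-contained and independent of the upper-vs-lower constant gap you would need. The $2^{\Omega(n\log n)}$ family is the set of binary nets (one point per $k$-canonical cell for every $k$), encoded bijectively by $\tfrac{1}{2}n\log n$-bit partition vectors; constraining $192$-bit blocks to two special values translates Hamming distance into corner-volume distance (Lemma~\ref{lem:hamming}), and a Gilbert--Varshamov extraction --- the analogue of your Tur\'an step, which is the right final move --- finishes. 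To rescue your route you would essentially have to prove a Schmidt-type bound tailored to subsets of your seed construction that beats the universal constant, which is exactly what the paper's corner-volume analysis accomplishes in a different guise.
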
 

We first show how Theorem~\ref{thm:point_sets} implies Theorem~\ref{thm:lower_bound}.

\begin{proof}[of Theorem~\ref{thm:lower_bound}] 
We only need to prove the $\Omega({1
  \over \eps} \log^2 {1 \over \eps})$ lower bound. Suppose we group the points into
$N = {1 \over \eps} \log {1 \over \eps}$ 
fat points, each of size $\eps n / \log {1 \over \eps}$. By Theorem~\ref{thm:point_sets}, there is a
collection $\Pcal^*$ of $2^{\Omega(N\log N)}$ fat point sets, such that
for any two fat point sets $P_1, P_2 \in \Pcal^*$, there exists a
rectangle $R$ such that the number of fat points in $R\cap P_1$ and $R
    \cap P_2$  differs by at least $\ge c \log N$. Since each fat
    points corresponds to $\eps n / \log {1 \over \eps}$ points, it follows that the counts of $P_1\cap R$
and $P_2 \cap R$ differs by at least  
$${\eps n\over  \log {1 \over \eps}}\cdot c\log N = {\eps n\over  \log {1 \over \eps}} \cdot c\log \left ({1 \over \eps}
\log {1 \over \eps }\right) \ge c\eps
n.$$
 Therefore, a data structure that
answers queries with error ${c \over 2}\eps n$ have to distinguish $P_1$
and $P_2$.  
Thus, to distinguish all the $2^{\Omega(N\log N)}$ point sets in
$\Pcal^*$, the data structure has to use at least $\Omega(N\log N) =
\Omega({1 \over \eps} \log^2 {1 \over \eps})$ bits.
\end{proof}

In the rest of this section, we will focus on proving
Theorem~\ref{thm:point_sets}.  To derive the sub-collection $\Pcal^*$ in
Theorem~\ref{thm:point_sets}, we begin by looking into  a collection of point sets
called {\em binary nets}. Binary nets are a
special type of point sets under a more general concept called {\em
  $(t,m,s)$-nets}, which are introduced in~\cite{matousek:discrepency} as
an example of point sets with low Lebesgue discrepancy. See the survey by
Clayman et~al.~\cite{clayman1999updated} or the book by Hellekalek
et~al.~\cite{hellekalek1998random} for more results on $(t,m,s)$-nets.  In
this paper we will show that binary nets have two other nice properties: 1)
A binary net has high combinatorial discrepancy, i.e., $\Omega(\log n)$; 2)
there is a bit vector representation for every binary net, which allows us
to extract a sub-collection by constructing a subset of bit vectors.  In
the following sections, we will define binary nets, and formalize these two
properties.

\subsection{Definitions}
For ease of the presentation, we assume that the $n\times n$ grid is
embedded in the unit square $[0,1)^2$. We partition $[0, 1)^2$ into $n\times
n$  squares, each of size $\frac{1}{n^2}$. We assume the grid points are placed at the mass 
centers of the  $n^2$ squares, that is, each grid point has
coordinates $(\frac{i}{n}+\frac{1}{2n}, \frac{j}{n}+\frac{1}{2n})$, for $i,j \in [n]$, where $[n]$ denote
the set of all integers in $[0,n)$.   For the
sake of simplicity, we define the grid point $(i,j)$ to be
the grid point with coordinates $(\frac{i}{n}+\frac{1}{2n}, \frac{j}{n}+\frac{1}{2n})$, and we 
do not distinguish a grid point and the square
it resides in. 

Now we introduce the concepts of {\em $(a,b)$-cell} and {\em $k$-canonical}
cell.

\begin{Definition}
A {\em $(a,b)$-cell} at position $(i,j)$ is  the
rectangle  $[\frac{i2^a}{n},\frac{(i+1) 2^a}{n}) \times [\frac{ j2^b}{n}, \frac{(j+1) 2^b}{n})$.  We use
$G_{a,b}(i, j)$ to denote the $(a,b)$-cell at position $(i,j)$, and  $G_{a,b}$
to denote the set of all $(a,b)$-cells.
\end{Definition}

\begin{Definition}
A {\em $k$-canonical cell} at position $(i,j)$
 is a $(k, \log n-k)$-cell with coordinates $(i,j)$. We use
 $G_k(i,j)$, to denote the $k$-canonical cell at position $(i,j)$, and
 $G_k$ to denote the set of all $k$-canonical cells.
\end{Definition}
 
   
\begin{figure}[t]
\centering{
\includegraphics[width=.7\linewidth]{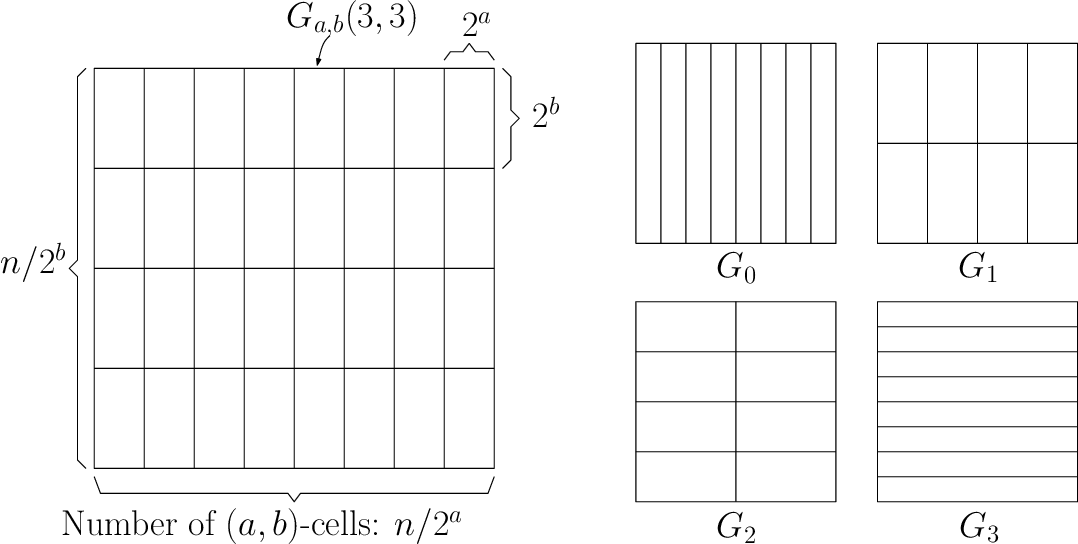}}  
\caption{\label{fig:ab} Illustrations of  $(a,b)$-cells and canonical cells.}
\end{figure}

Figure~\ref{fig:ab} is the illustration of $(a,b)$-cells and
canonical cells. Note that the position $(i,j)$ for a $(a,b)$-cell takes value in
$[n/2^a]\times [n/2^b]$. 
In particular, we call $G_0(i,0)$ the $i$-th column and
$G_{\log n}(0,j)$ the $j$-th row. Note that for a fixed $k$, $G_k$ 
partitions the grid $[0,1)^2$ into
$n$ rectangles. Based on the definition of $k$-canonical cells, we
define the binary nets:
 
\begin{Definition}
A point set $P$ is  called a  binary net if for any $k \in
[\log n]$, $P$ has exactly one point in each $k$-canonical
cell.
\end{Definition}

Let $\Pcal_0$ denote the collection of binary nets.  
In other word, $\Pcal_0$ is the set
$$\{ P \mid \abs{P\cap G_k(i,j)}=1, k \in [\log n], i \in [n/2^k],
j \in [2^k] \}.$$
  
It is known that the point sets in $\Pcal_0$ have Lebesgue discrepancy
$O(\log n)$; below we show that they also have $\Omega(\log n)$ combinatorial
discrepancy.  However, the union of two point sets in $\Pcal_0$ could have combinatorial
discrepancy as low as $O(1)$.   Thus we need to carefully extract a subset
from $\Pcal_0$ with high pairwise union discrepancy.

\subsection{Combinatorial Discrepancy and Corner Volume}
In this section, we focus on proving the following theorem, which
shows that the combinatorial discrepancy of a binary net is large.

\begin{theorem}
\label{thm:discrepancy}
For  any point set $P\in \Pcal_0$, we have $\disc(P,\Rcal_2) =\Omega(\log n)$. 
\end{theorem}

Strictly speaking, Theorem~\ref{thm:point_sets} does not depend on
Theorem~\ref{thm:discrepancy}, but this theorem gives
us some insights on the binary nets. Moreover, a key lemma
to proving Theorem~\ref{thm:point_sets}
(Lemma~\ref{lem:corner_volume_distance}) shares essentially the same proof
with Theorem~\ref{thm:discrepancy}. To prove Theorem~\ref{thm:discrepancy}, we need the following definition of {\em corner
  volume}:

\begin{Definition}
Consider a point set $P\in \Pcal_0$ and a $k$-canonical cell
$G_k(i,j)$. Let $q$ be the point of $P$ in $G_k(i,j)$. We define the {\em corner
volume} $V_P(k,i,j)$ to be the volume of the orthogonal rectangle defined by $q$
and its nearest corner of $G_k(i,j)$. We use $S_P$ to denote the
summation of the corner volumes over all possible triples $(k,i,j)$, that
is, 
$$S_P=\sum_{k=0}^{\log n}\sum_{i=0}^{n/2^k-1}\sum_{j=0}^{2^k-1}
V_P(k,i,j).$$
\end{Definition}

\begin{figure}[t]
\centering{ 
\includegraphics[width=.6\linewidth]{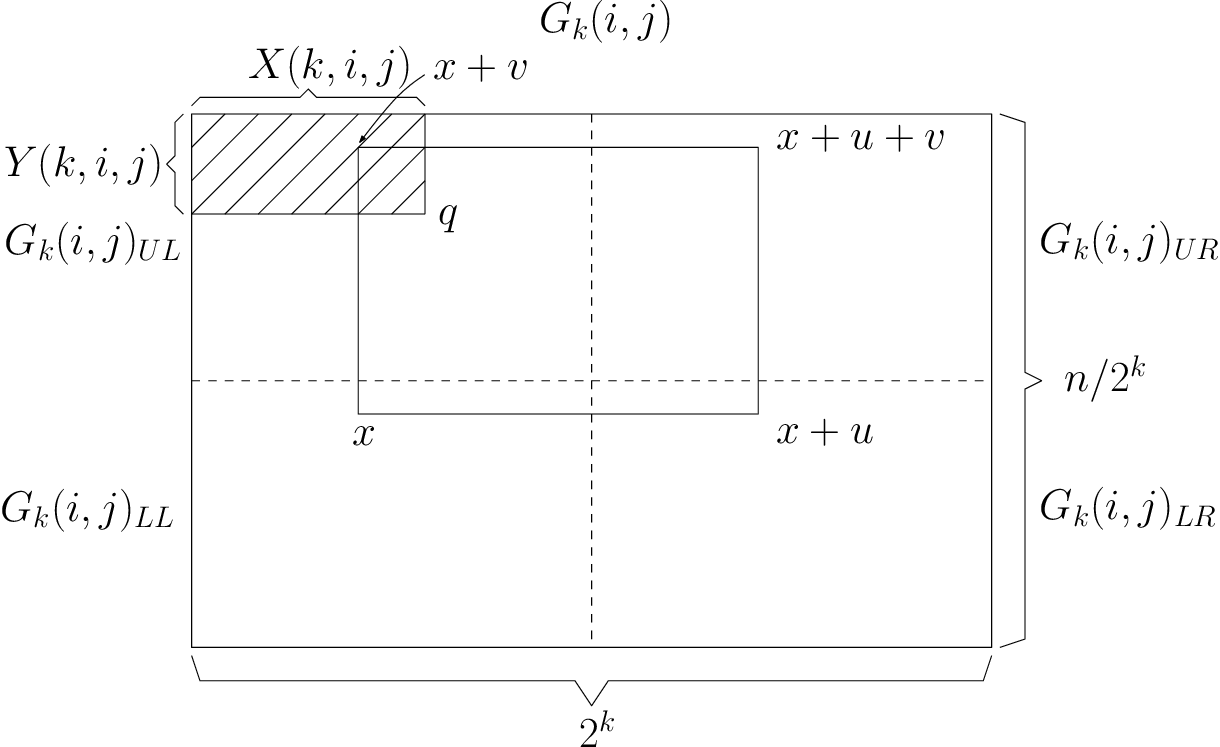}}
\caption{\label{fig:integration} Illustration of  the corner volume
  and the four analogous points. The area in shadow represents the
  corner volume $V_P(k,i,j).$}
\end{figure}

See Figure~\ref{fig:integration} for the illustration of corner
volumes. A key insight of our lower bound proof is  
the following lemma, which relates the combinatorial discrepancy of $P$ with
its corner volume sum $S_P$.
\begin{lemma}
\label{lem:corner}
There exists a constant $c$, such that 
for any point set $P \in \Pcal_0$ with corner volume sum
$$S_P \ge c\log n,$$
we have  $\disc(P,\Rcal_2)= \Omega (\log n)$.
\end{lemma}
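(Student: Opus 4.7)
The plan is to adapt the Roth--Halász test-function method for discrepancy lower bounds to the combinatorial setting. Given any coloring $\chi : P \to \{+1, -1\}$, define the signed counting function $F_\chi(x,y) = \chi(P \cap ([0,x) \times [0,y)))$; up to a factor of $4$ (inclusion--exclusion between two-sided and four-sided rectangles), the combinatorial discrepancy of $P$ with respect to $\Rcal_2$ equals $\|F_\chi\|_\infty$, so it suffices to show $\|F_\chi\|_\infty = \Omega(\log n)$.

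The main tool is a family of 2D Haar-like wavelets indexed by canonical cells: for each $G_k(i,j)$, let $h_{k,i,j}$ be the function supported on the cell that factors as $h_{k,i,j}(x,y)=h_{k,i}^x(x)\cdot h_{\log n-k,j}^y(y)$, a tensor product of 1D Haar wavelets ($\pm 1$ on each half of the corresponding dyadic interval). Because 1D Haars at different dyadic levels are mutually orthogonal, the family $\{h_{k,i,j}\}$ is pairwise orthogonal across all canonical levels. A direct calculation then establishes the central identity
\[
\langle F_\chi,\, h_{k,i,j}\rangle \;=\; \chi(q_{k,i,j})\cdot \sigma_{k,i,j}\cdot V_P(k,i,j),
\]
where $q_{k,i,j}$ is the unique point of $P$ in $G_k(i,j)$ and $\sigma_{k,i,j}\in\{\pm 1\}$ is determined by which sub-quadrant of the cell contains it: contributions from points outside the cell vanish since $h_{k,i,j}$ integrates to zero over a full cell, whereas for the lone $q_{k,i,j}$ inside, the double integral factors into the product of its distances to the nearest horizontal and vertical edges of $G_k(i,j)$, i.e.\ exactly $V_P(k,i,j)$.

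Choosing signs $\epsilon_{k,i,j}=\chi(q_{k,i,j})\,\sigma_{k,i,j}$ yields $\sum_{k,i,j}\epsilon_{k,i,j}\,\langle F_\chi,h_{k,i,j}\rangle = S_P = \Omega(n^2\log n)$. Plain Cauchy--Schwarz with $\|\sum_{k,i,j}\epsilon_{k,i,j}h_{k,i,j}\|_2^2 = n^2\log n$ (by orthogonality and $\|h_{k,i,j}\|_2^2=n$) gives only $\|F_\chi\|_\infty = \Omega(\sqrt{\log n})$, which is the familiar Roth $L^2$ barrier. To sharpen this to $\Omega(\log n)$ I would invoke the Halász Riesz-product trick: with $\phi_k = \sum_{i,j}\epsilon_{k,i,j}h_{k,i,j}$ (which is pointwise $\pm 1$, as the level-$k$ Haars are disjointly supported), form
\[
\Phi \;=\; \prod_{k=0}^{\log n - 1}\bigl(1 + \delta\,\phi_k\bigr)\; -\; 1
\]
for a suitably small constant $\delta>0$. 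Every factor is non-negative, so $\|\Phi\|_1 = O(n^2)$ (after verifying that the multilinear cross-integrals vanish or are small, using the orthogonality of products of Haars at distinct canonical levels), while $\langle F_\chi,\Phi\rangle$ is dominated by its linear-in-$\delta$ term $\delta\sum_k\langle F_\chi,\phi_k\rangle = \delta\, S_P = \Omega(n^2\log n)$. The bound $|\langle F_\chi,\Phi\rangle| \le \|F_\chi\|_\infty\cdot\|\Phi\|_1$ then forces $\|F_\chi\|_\infty = \Omega(\log n)$, and hence $\disc(P,\Rcal_2)=\Omega(\log n)$.

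The main obstacle is the bookkeeping for the higher-order cross terms in the Riesz expansion of $\Phi$: each multilinear contribution is a product of Haar wavelets drawn from several distinct canonical levels, and one must show that their combined inner products with $F_\chi$ total at most $O(\delta^2 n^2\log n)$, so as not to swamp the linear $\delta\,S_P$ term, nor blow up $\|\Phi\|_1$. The binary-net property (exactly one point per canonical cell at every level) is essential here: it lets us identify each such multilinear product with a refined dyadic wavelet whose pairing with $F_\chi$ is bounded by a corresponding ``generalized corner volume,'' paralleling Halász's original handling of the Lebesgue discrepancy function.
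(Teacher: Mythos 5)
Your proposal follows essentially the same route as the paper's proof: Roth's orthogonal function method with checkered (Haar-type) wavelets on canonical cells, the identity pairing each wavelet with the corner volume $V_P(k,i,j)$, and the Hal\'asz Riesz product $\prod_k(1+\delta\phi_k)-1$ whose $L^1$ norm is $O(n^2)$ and whose linear term recovers $S_P$. The one step you defer --- bounding the higher-order cross terms by $O(\delta^2 n^2\log n)$ --- is handled in the paper exactly as you anticipate, by noting that $f_{k_1}\cdots f_{k_l}$ is checkered at scale $(k_1,\log n-k_l)$, that only $n$ of those fine cells are non-empty, and then summing $2^{-(k_l-k_1)}$ over all level choices via a binomial identity.
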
 

The proof of Lemma~\ref{lem:corner} 
makes use of the Roth's orthogonal function
method~\cite{roth1954irregularities}, which is 
widely used for proving lower bounds for Lebesgue discrepancy
(see~\cite{chazelle00:_discr,matousek:discrepency}). 
\begin{proof}
Consider a binary net $P\in \Pcal_0$ that satisfies $S_P
\ge c \log n$, where $c$ is constant to be determined later.  Given any coloring $\chi: P \rightarrow \{-1,
+1\}$ and a point  $x=(x_1, x_2) \in [0, 1)^2$, the combinatorial
discrepancy $D(x)$ at a point $x$ is defined to be 
$$D(x) = \sum_{p \in P \cap [0,x_1) \times [0, x_2)} \chi (p).$$ 
If we can  prove  $\sup_{x\in [0,n)^2}\abs{D(x)}= \Omega (\log n)$, 
the lemma will follow.

For  $k \in [\log n]$, we define normalized wavelet functions $f_k$ as
follow: for each $k$-canonical cell $G_k(i,j)$, let $q$ denote the
point contained in it. We subdivide $G_k(i, j)$ into four
equal-size quadrants, and use $G_k(i,j)_{UR}$, $G_k(i,j)_{UL}$,
$G_k(i,j)_{LR}$, $G_k(i,j)_{LL}$ to denote the upper right, upper
left, lower right and lower left quadrants, respectively (See
Figure~\ref{fig:integration}). 
 Set $f_k(x)= \chi(q)$ over quadrants  $G_k(i,j)_{UR}$ and
 $G_k(i,j)_{LL}$,   and $f_k(x)=-\chi(q)$ over the other two
quadrants. To truly reveal the power of these wavelet functions, we define a more general
class of functions called {\em checkered} functions.

\begin{definition}
\label{def:checkered} We say a function $f:[0,1)^2
\rightarrow \mathbb{R}$ is 
{\em $(a,b)$-checkered} if for each $(a,b)$-cell, there
exists a color $C\in \{-1,+1\}$ such that $f$ is equal to $C$ over
$G_{a,b}(i,j)_{UR}$ and $G_{a,b}(i,j)_{LL}$ and $-C$
over the other two quadrants.
\end{definition}

Note that our definition of checkered function is slight different
from the one used in~\cite{chazelle00:_discr}. It is easy to see the wavelet function $f_k$ is
$(k,\log n-k)$-checkered, and the integration
of a $(a,b)$-checkered function  over an $(a,b)$-cell is $0$. The
following lemma states that the checkered property is ``closed'' under multiplication. 

\begin{fact}
\label{fact:checkered}
If $f$ is $(a_1,b_1)$-checkered and $g$ is $(a_2,b_2)$ checkered, where $a_1
< a_2$ and $b_1 > b_2$, then $fg$ is $(a_1,b_2)$-checkered.
\end{fact}

For a proof, consider an $(a_1, b_2)$-cell $G_{a_1,b_2}(i,j)$. We observe that
this cell is defined by the intersection of an $(a_1,b_1)$-cell and an
$(a_2,b_2)$-cell, and we use  $G_{a,b}(i_1,j_1)$ and
$G_{a_2,b_2}(i_2,j_2)$ to denote these two cells,
respectively. Therefore the
four quadrants of $G_{a_1,b_2}(i,j)$ are defined by the
intersections of two neighboring quadrants of $G_{a_1,b_1}(i_1,j_1)$ and
two neighboring quadrants of $G_{a_2,b_2}(i_2,j_2)$. Without
loss of generality, we assume the four quadrants are defined by the intersections of the two
upper quadrants of $G_{a_1,b_1}(i_1,j_1)$ and two left quadrants of
$G_{a_2,b_2}(i_2,j_2)$ (see Figure~\ref{fig:checkered}).   Since $f$
is $(a_1,b_1)$-checkered and $g$ is $(a_2,b_2)$ checkered,  we can assume $f$
equal to $C_1$ and $-C_1$ over $G_{a_1,b_1}(i_1,j_1)_{UR}$ and
$G_{a_2,b_2}(i_2, j_2)$, and $g$ equal to $C_2$ and $-C_2$ over $G_{a_2,b_2}(i_2,j_2)_{UL}$ and
$G_{a_2,b_2}(i_2, j_2)_{LL}$, respectively. It follows that the $fg$ is equal to
$C_1C_2$ over $G_{a_1,b_2}(i,j)_{UL}$ and  $G_{a_1,b_2}(i,j)_{LR}$,
and $-C_1C_2$ over $G_{a_1,b_2}(i,j)_{UR}$ and
$G_{a_1,b_2}(i,j)_{LL}$. Thus $fg$ is an $(a_1,b_2)$-checkered function.

\begin{figure}[t] 
\centering{
\includegraphics[width=.5\linewidth]{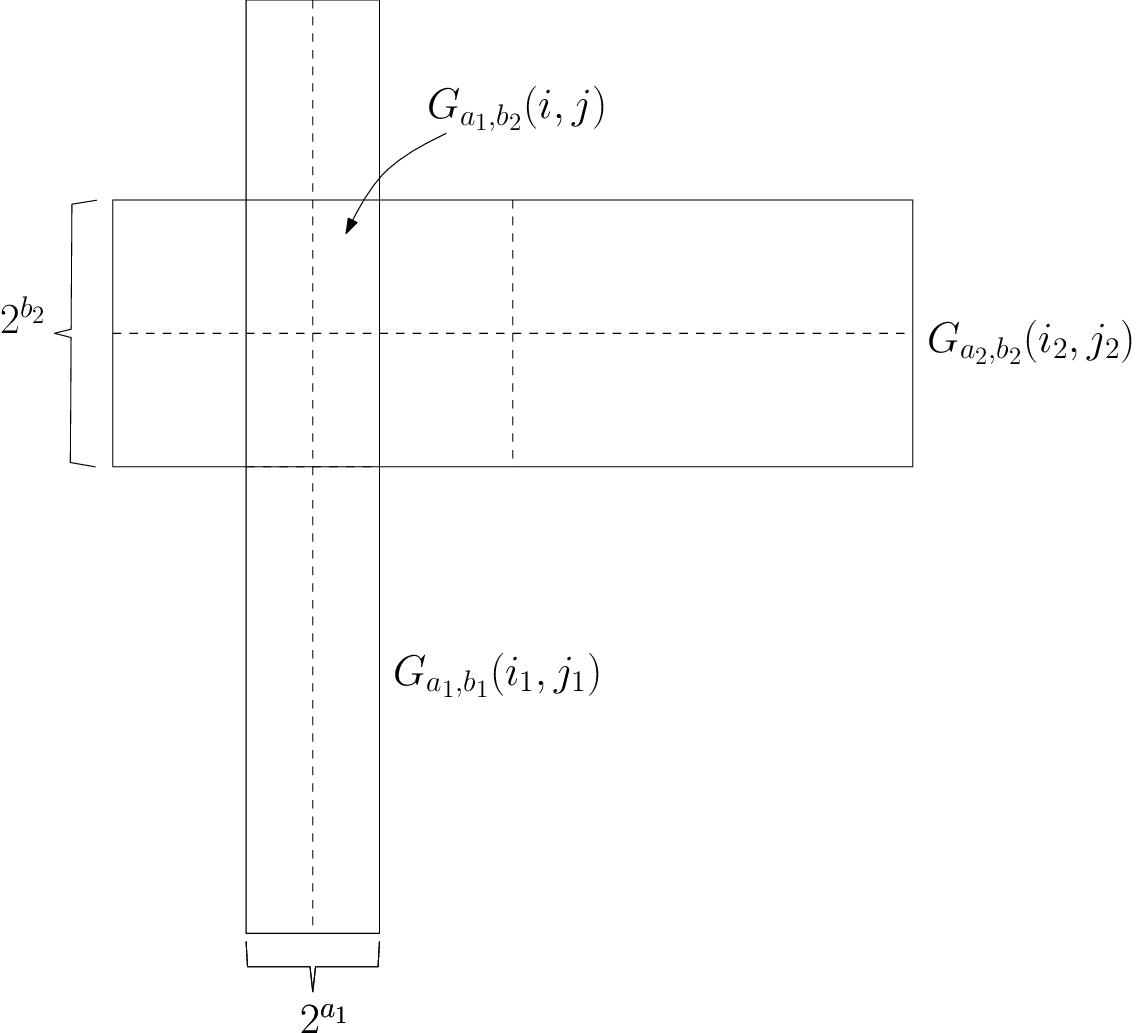}}
\caption{\label{fig:checkered} Illustration of the intersection of two cells }
\end{figure}

A direct corollary from Fact~\ref{fact:checkered} is that  the wavelet
functions are  {\em
  generalized orthogonal}:

\begin{corollary}
\label{cor:checkered}
For $0 \le k_1 < \cdots < k_l \le \log n$, the function $f_{k_1}(x) \dotsm
f_{k_l}(x)$ is a $(k_1, \log n -k_l)$-checkered. As a consequence, we have
$$\int_{[0,1)^2} f_{k_1}(x) \dotsm f_{k_l}(x) dx = 0.$$
\end{corollary}

In the remaining of the paper we assume the range of the integration is $[0,1)^2$ and the variable
of integration is $dx$ when not
specified. We define the Riesz product 
$$G(x)= -1 + \prod_{k=0}^{\log n }(\gamma f_k(x)+1),$$
where $\gamma$ is some constant to be determined later.
By the inequality 
$$\abs{\int GD} \le \int \abs{GD}  \le \sup_{x \in [0,1)^2}
\abs{D} \cdot \int \abs{G},$$
we can lower-bound the combinatorial discrepancy of $P$ as follows:
 \begin{equation} \sup_{x \in [0,1)^2 } \abs{D} \ge \left. \abs{\int GD} \middle/ \int
   \abs{G} .\right. \label{eqn:sup} \end{equation}
For the denominator $\int \abs{G}$, we have
\begin{align}
\int \abs{G} &= \int \abs{-1 + \prod_{k=0}^{\log n }(\gamma
  f_k+1)} \le1+\sum_{l=0} ^{\log n} \gamma^l \sum_{0\le
  k_1 < \ldots <k_l \le \log n} \int f_{k_1} \dotsm f_{k_l} \nonumber  \\
&=2+\sum_{l=1} ^{\log n} \gamma^l \sum_{0\le
  k_1 < \ldots <k_l \le \log n} \int f_{k_1} \dotsm f_{k_l} =2. \label{eqn:denominator}
\end{align}
The last equation is due to Corollary~\ref{cor:checkered}.  The numerator $\abs{\int G(x)D(x) dx}$ can be expressed as follow:
\begin{align}
\abs{\int GD} =& \abs{\int \left( -1 + \prod_{k=0}^{\log n
  }(\gamma f_k+1) \right) \cdot D} \nonumber \\
=& \abs{ \int  \left ( \gamma \sum _{k=0}^{\log n} f_k +\sum_{l=2}
    ^{\log n} \gamma^l \sum_{
  0 \le k_1 < \ldots <k_l \le \log n }  f_{k_1} \dotsm f_{k_l}
\right) \cdot D} \nonumber \\
\ge& \, \gamma \abs{\sum_{k=0}^{\log n} \int f_k D} -\sum_{l=2}^{\log n}
\gamma^l \abs{  \sum_{  0\le k_1 < 
      \ldots < k_l \le \log n } \int f_{k_1}  \dotsm f_{k_l} D}. \label{eqn:numerator}
\end{align}

In order to estimate $\int f_k D$, we consider the integration of a single product $ f_k(x) D(x) $ over a
$k$-canonical cell $G_{k}(i,j)$. Recall that there is exactly one
point of $P$ that lies in $G_k(i,j)$. We use $q$ to denote this point
in $P$, and $\chi(q)$ denote its color. Define horizontal
vector $u=(\frac{2^{k-1}}{n},0)$ and vertical vector $v=(0, \frac{1}{2^{k+1}})$. Then for any point $x\in
G_k(i,j)_{LL}$, points $x+u$, $x+v$ and $x+u+v$ are the analogous points
in quadrants $G_k(i,j)_{LR}$, $G_k(i,j)_{UL}$ and $G_k(i,j)_{UR}$ of
$x$, respectively (see Figure~\ref{fig:integration}). The four
analogous points defines an orthogonal rectangle.
We use $R_x$ to denote the orthogonal rectangle, and  function $R(x)$
to denote the indicator function of
point $q$ and $R_x$, that is, $R(x)=1$ if $q\in R_x$ and $R(x)=0$ if
otherwise.   We can express the integral as

\begin{align*}
\int_{G_k(i,j)}f_k(x) D(x) dx =& \int_{G_k(i,j)_{LL}}
\chi(q) \left(D(x)-D(x+u) -D(x+v)+D(x+u+v)\right) dx\\  
=& \int_{G_k(i,j)_{LL}}\chi (q) \cdot \chi(q) R(x) dx
= \int_{G_k(i,j)_{LL}} R(x) dx .
\end{align*}
The second equation is because $(D(x)-D(x+u)-D(x+v)+D(x+u+v))$ only
counts points inside $R_x$, which can only be $q$, or nothing otherwise.
Observe that $R(x)=1$ if and only if one
of  $x$'s analogous points lies inside the rectangle 
defined by $q$ and its nearest corner (see Figure~\ref{fig:integration}), so we have
\begin{align}
\int_{G_k(i,j)}f_k D =&  \int_{G_k(i,j)_{LL}}
R 
=V_P(k,i,j). \label{eqn:corner}
\end{align}
Now we can compute the first term in~\eqref{eqn:numerator}:
\begin{align}
 \gamma \abs{\sum_{k=0}^{\log n} \int f_k D} 
&= \gamma \abs{
\sum_{k=0}^{\log n}
\sum_{i=0}^{n/2^k-1} \sum_{j=0}^{2^k-1} \int_{G_k(i,j)} f_k D}
= \gamma
\abs{ \sum_{k=0}^{\log n}
\sum_{i=0}^{n/2^k-1} \sum_{j=0}^{2^k-1}  V_P(k,i,j)} \nonumber\\
&= \gamma S_P \ge c\gamma \log n. \label{eqn:first_part}
\end{align}

For the second term in~\eqref{eqn:numerator}, consider a $(k_1, \log n -k_l)$-cell
$G_{k_1, \log n-k_l}$. Note that $P$ intersects $G_{k_1, \log n
  -k_l}(i,j)$ with at most $1$ point.    By Fact~\ref{fact:checkered}, function  $f_{k_1}\dotsm f_{k_l}$ is
$(k_1, \log n-k_l)$-checkered, so following similar arguments in the
proof of equation~\eqref{eqn:corner}, we can show that the integral $\abs{\int_{G_{k_1, \log n-
    k_l}(i,j)} f_{k_1}\dotsm f_{k_l}D} $ is $0$ if $P\cap G_{k_1, \log n-k_l}
=\emptyset$ and otherwise equal to the
corner volume of $G_{k_1, k_l}(i,j)$. 
In the latter case, we can relax the corner volume to the
volume of $G_{k_1, \log n -k_l}(i,j)$, that is, $\frac{1}{2^{k_l-k_1}n}$.  Thus we
can estimate the integral as follows:
 $$\abs{\int_{G_{k_1, \log n-
    k_l}(i,j)} f_{k_1} \dotsm f_{k_l} D } \le
\frac{1}{2^{k_l-k_1}n}.$$
Since there are $n$ non-empty $(k_1, \log n -k_l )$-cells, we have
 $$\abs{\int f_{k_1}\dotsm f_{k_l} D}  \le n \cdot \frac{1}{2^{k_l-k_1}n}=\frac{1}{2^{k_l-k_1}}.$$
Now we can estimate the second term in~\eqref{eqn:numerator}:
\begin{align}
\sum_{l=2}^{\log n} \gamma^l \abs{\sum_{0\le k_1 < \ldots < k_l \le
    \log n} \int f_{k_1} \dotsm f_{k_l} D}
\le& \sum_{l=2}^{\log n} \gamma^l \sum_{0\le k_1 < \ldots < k_l \le
    \log n} \frac{1}{2^{k_l-k_1}} \nonumber\\
=& \sum_{l=2}^{\log n} \gamma^l \sum_{w=l-1}^{\log n+1}
\sum_{k_l-k_1=w}
\frac{1}{2^w}{w-1\choose l-2}. \label{eqn:second_part_1}
\end{align}
For the last equation we replace $k_l-k_1$ with a new index $w$  and
use the fact that there are ${w-1 \choose
 l-2}$ ways to choose $k_2,\ldots, k_{l-1}$ in an interval of length
$w$. Note that for a fixed $w$, there are $\log n +1 -w$ possible values
for $k_1$, so
\begin{align}
\sum_{l=2}^{\log n} \gamma^l \sum_{w=l-1}^{\log n+1}
\sum_{k_l-k_1=w}
\frac{1}{2^w}{w-1\choose l-2}
=& \sum_{l=2}^{\log n} \gamma^l \sum_{w=l-1}^{\log n+1}
\frac{\log n +1 -w}{2^w}{w-1\choose l-2}  \nonumber\\
\le& \sum_{l=2}^{\log n} \gamma^l \sum_{w=l-1}^{\log n+1}
\frac{\log n }{2^w}{w-1\choose l-2}  \nonumber\\
=& \log n \sum_{l=2}^{\log n} \gamma^l \sum_{w=l-1}^{\log n+1}
\frac{1}{2^w}{w-1\choose l-2}. \label{eqn:second_part_2}
\end{align}
By inverting the order of the summation,
\begin{align}
\log n \sum_{l=2}^{\log n} \gamma^l \sum_{w=l-1}^{\log n+1}
\sum_{k_l-k_1=w}
\frac{1}{2^w}{w-1\choose l-2}
=&\gamma^2 \log n \sum_{w=1}^{\log n +1} \frac{1}{2^{w}}
\sum_{l=2}^{w+1}{w-1 \choose l-2} \gamma^{l-2} \nonumber \\
=&\gamma^2 \log n \sum_{w=1}^{\log n +1}
\frac{1}{2^{w}}(1+\gamma)^{w-1} \nonumber \\
=& 2\gamma^2 \log n \sum_{w=1}^{\log n +1}
\left( \frac{1+\gamma}{2}\right)^{w-1} 
\le \frac{2 \gamma^2 }{1-\gamma}\log n. \label{eqn:second_part_3}
\end{align}
So
from~\eqref{eqn:first_part},~\eqref{eqn:second_part_1},~\eqref{eqn:second_part_2}
and~\eqref{eqn:second_part_3} we have 
$$\abs {\int GD} \ge c\gamma \log n-  \frac{2 \gamma^2 }{1-\gamma}\log n.$$ 
Setting $\gamma$ small enough while combining with~\eqref{eqn:sup}
and~\eqref{eqn:denominator}  completes the proof. 
\end{proof}

Now we can give a proof to Theorem~\ref{thm:discrepancy}.
By Lemma~\ref{lem:corner}, we only need to show that the corner volume
sum of any point set $P\in \Pcal_0$ is large. 
Fix $k$ and consider a $k$-canonical cell $G_k(i,j)$. Let $q$ denote the point
in $P\cap G_k(i,j)$. We define the corner $x$-distance
of  $G_k(i,j)$ to be the difference between the
$x$-coordinate of $q$ and that of  its nearest corner
of $G_k(i,j)$. The corner $y$-distance is defined in similar
manner. See Figure~\ref{fig:integration}. We use  $X(k,i,j)$ and $Y(k,i,j)$ to denote the
corner $x$-distance and corner $y$-distance, respectively.
Note that  the corner volume $V_P(k,i,j)$ is the product of $X(k,i,j)$ and
$Y(k,i,j)$.  The following fact holds for the $x$-distances of
canonical cells in a column:

\begin{fact}
\label{fact:x}
 Fix $k$ and $i$, we have $\{X(k,i,j) \mid j \in
  [2^k]\}=\{\frac{j}{n}+\frac{1}{2n}, \frac{j}{n}+\frac{1}{2n} \mid j \in  [2^{k-1}]\}$, where both are taken as
  multisets. 


\end{fact}
For a proof, note that  the $k$-canonical cell
$G_k(i,j)$ is intersecting with $2^k$ columns:
$G_0(i2^k,0), \ldots, G_0((i+1)2^k-1,0)$.
There are $2^{k}$ points in $G_k(i,0),\ldots,G_k(i, 2^k-1)$, and they
must reside in different columns. Therefore there is exactly one
point in the each of the $2^k$  columns,  and their corner
$x$-distances span from $\frac{1}{2n}$ to $\frac{2^{k-1}-1}{n}+\frac{1}{2n}$, and each value is hit
exactly twice. Similarly, we have

\begin{fact}
\label{fact:y}
Fix $k$ and $j$, we have $\{X(k,i,j) \mid i \in
  [n/2^k]\}=\{ \frac{i}{n} + \frac{1}{2n}, \frac{i}{n} + \frac{1}{2n} \mid i \in  [n/2^{k+1}]\},$ where both are
  taken as multisets.
\end{fact}
 
Now consider the product of $X(k,i,j)$ and $Y(k,i, j)$ over all
$(i,j)$ for a fixed $k$:
\begin{align*}
\prod_{i=0}^{n/2^{k}-1}\prod_{j=0}^{2^k-1}V_P(k,i,j) 
=&
\prod_{i=0}^{n/2^{k}-1}\prod_{j=0}^{2^k-1}X(k,i,j)Y(k,i,j)\\
=& \prod_{i=0}^{n/2^{k}-1}\prod_{j=0}^{2^k-1}X(k,i,j)
\cdot \prod_{j=0}^{2^k-1}\prod_{i=0}^{n/2^{k}-1}Y(k,i,j) \\
=& \prod_{i=0}^{n/2^{k}-1}\prod_{j=0}^{2^{k-1}-1}(\frac{j}{n} + \frac{1}{2n})^2
\cdot \prod_{j=0}^{2^k-1}\prod_{i=0}^{n/2^{k+1}-1}(\frac{i}{n} + \frac{1}{2n})^2.
\end{align*}
The last equation is due to Fact~\ref{fact:x} and
Fact~\ref{fact:y}. By relaxing  $ \frac{i}{n} + \frac{1}{2n} $ and $ \frac{j}{n} + \frac{1}{2n} $ to $ \frac{i+1}{2n}$ and
$\frac{j+1}{2n}$, we have
\begin{align*}
\prod_{i=0}^{n/2^{k}-1}\prod_{j=0}^{2^k-1}V_P(k,i,j) 
\ge& \prod_{i=0}^{n/2^{k}-1}\prod_{j=0}^{2^{k-1}-1}
\left(\frac{i+1}{2n} \right)^2
\cdot \prod_{j=0}^{2^k-1}\prod_{i=0}^{n/2^{k+1}-1}
\left(\frac{j+1}{2n} \right )^2\\
=\frac{1}{n^n}&\prod_{i=0}^{n/2^{k}-1}\left(\frac{(2^{k-1})!}{2^{2^{k-1}}}\right)^2
\cdot \prod_{j=0}^{2^k-1}\left(\frac{(n/2^{k+1})!}{2^{n/2^{k+1}}}\right)^2.
\end{align*}
By the inequality $x!\ge (x/e)^x$,
\begin{align*}
\prod_{i=0}^{n/2^{k}-1}\prod_{j=0}^{2^k-1}V(k,i,j) 
\ge& \frac{1}{n^{2n}} \prod_{i=0}^{n/2^{k}-1}\left( \left(
    \frac{2^{k-1}}{2e} \right)^{2^{k-1}}\right)^2
\cdot \prod_{j=0}^{2^k-1}\left( \left( \frac{n/2^{k+1}}{2e} \right)^{n/2^{k+1}} \right)^2\\
=& \frac{1}{n^{2n}} \prod_{i=0}^{n/2^k-1} \left(
    \frac{2^{k-1}}{2e} \right)^{2^{k}}
\cdot \prod_{j=0}^{2^k-1}\left(  \frac{n/2^{k+1}}{2e} \right)^{n/2^{k}}\\
=& \frac{1}{n^{2n}} \left(
    \frac{2^{k-1}}{2e} \right)^{2^{k}\cdot n/2^{k}}\cdot \left(  \frac{n/2^{k+1}}{2e} \right)^{n/2^{k}\cdot 2^k}\\
=& \frac{1}{n^{2n}} \left(
    \frac{2^k}{4e} \right)^n \cdot \left(  \frac{n/2^{k}}{4e} \right)^n
= \left(\frac{1}{16en}\right)^n.
\end{align*}
Using the inequality of geometric means, 
\begin{align*}\sum_{i=0}^{n/2^k-1}\sum_{j=0}^{2^k-1}V_P(k,i,j)  
\ge n\cdot \left(
  \prod_{i=0}^{n/2^{k}-1}\prod_{j=0}^{2^k-1}V_P(k,i,j) \right)^{1/n}
\ge n \cdot \frac{1}{16en} = \frac{1}{16e}.
\end{align*}
So the corner volume sum $S_P=\sum_{k=0}^{\log
  n}\sum_{i=0}^{n/2^k-1}\sum_{j=0}^{2^k-1} \allowbreak V(k,i,j)$ is lower bounded
by $\log n /16e$, and by Lemma~\ref{lem:corner}, Theorem~\ref{thm:discrepancy} follows.

\subsection{A bit vector representation for $\Pcal_0$}
Another nice property of $\Pcal_0$ is that we can derive the exact
number of point sets in it. The following lemma is from the
book~\cite{darnall2008results}. We sketch the proof here, as it also provides a
bit vector presentation of each binary net, which is essential in our
lower bound proof.

\begin{lemma}[\cite{darnall2008results}]
\label{lem:large_size}
The number of point sets in $\Pcal_0$ is $2^{\frac{1}{2}n\log n}$.
\end{lemma}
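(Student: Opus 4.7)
The plan is to set up an explicit bijection between binary nets and bit vectors of length $\frac{1}{2}n\log n$. I would first observe that the $k=0$ and $k=\log n$ levels of the binary-net condition force $P$ to have exactly one point per column and per row, so a binary net is determined by a permutation $\pi:[n]\to[n]$ with $(i,\pi(i))$ being the point in column $i$. Writing each value in binary with the most significant bit first, $\pi(i)=(\pi_1(i),\pi_2(i),\ldots,\pi_{\log n}(i))$, I would then reformulate the level-$k$ condition as the statement that, for every column-block $[a\cdot 2^k,(a+1)\cdot 2^k)$ of width $2^k$, the map $i\mapsto(\pi_1(i),\ldots,\pi_k(i))$ is a bijection from the $2^k$ columns of the block onto $\{0,1\}^k$. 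This is a direct translation of the $(k,\log n-k)$-cell condition into bit-level language, since the top $\log n-k$ bits of $i$ pick out the super-column and the top $k$ bits of $\pi(i)$ pick out the row-stratum.

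With this reformulation in hand, the main step is to count the admissible permutations bit by bit. I would proceed inductively on $b=1,2,\ldots,\log n$, assuming that the previously chosen bits $\pi_1,\ldots,\pi_{b-1}$ satisfy the level-$(b-1)$ condition and counting the admissible extensions $\pi_b$. By the inductive hypothesis, within any column-block of width $2^{b-1}$ the $(b-1)$-bit prefix takes each value in $\{0,1\}^{b-1}$ exactly once, so within a column-block of width $2^b$ (the union of two adjacent width-$2^{b-1}$ sub-blocks) each $(b-1)$-bit prefix value appears in exactly two columns. The level-$b$ condition then asks that, for each such prefix value, $\pi_b$ assigns $0$ to one of those two columns and $1$ to the other, which gives $2^{2^{b-1}}$ choices per width-$2^b$ block. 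Since the level-$b$ condition is internal to each width-$2^b$ block, the $n/2^b$ blocks are decoupled at this step, and the total number of admissible $\pi_b$ is $(2^{2^{b-1}})^{n/2^b}=2^{n/2}$.

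Multiplying over $b=1,\ldots,\log n$ then gives $\prod_{b=1}^{\log n}2^{n/2}=2^{(1/2)n\log n}$ binary nets, matching the claim. The main point to verify carefully is the correspondence between the geometric $(k,\log n-k)$-cell condition and the bit-level bijection condition, together with the fact that at step $b$ one need only enforce the level-$b$ condition because the higher-level conditions will be enforced when $\pi_{b+1},\ldots,\pi_{\log n}$ are chosen; this is clean because each bit vector so constructed produces a genuine binary net, and conversely every binary net arises from a unique such sequence of choices. Beyond this bookkeeping the counting is mechanical, so I do not anticipate a serious obstacle.
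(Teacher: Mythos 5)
Your proof is correct and is essentially the paper's argument in a different guise: the paper's recursion $\abs{\Pcal_0(n)}=2^{n/2}\cdot\abs{\Pcal_0(n/2)}^2$, obtained by splitting the grid at $y=n/2$ and pairing adjacent columns, unrolls exactly into your level-by-level choice of the bits $\pi_b$, with the same $2^{n/2}$ independent binary choices per level and the same resulting $\tfrac{1}{2}n\log n$-bit encoding (the paper's partition vector $\Zbf_P$). Your permutation/prefix-bijection reformulation is a clean way to present it, but the decomposition and the count are the same.
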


\begin{proof}
It is
equivalent to prove that the number of possible
ways to place $n$ points on the $n\times n$ grid such that any
$k$-canonical cell $G_k(i,j)$ has exactly $1$ point is
$2^{\frac{1}{2}n \log n}$. Our proof proceeds  by an induction on $n$. 
Let $\Pcal_0(n)$ denote the
collection of binary nets of size $n$ in a $n\times n$ grid. 

Observe that the line $y=n/2$ divides the grid $[0,1)^2$ 
into two rectangles: the upper grid $[0,1)\times [\frac{1}{2},1)$ and
the lower grid $[0,1)\times [0,\frac{1}{2})$. For $i$ even, let $R_i$
denote the rectangle defined by the union
of  $i$-th and $(i+1)$-th columns $G_0(i,0)$ and $G_0(i+1,0)$. Note that
the line $y=n/2$ divides $R_i$ into $G_1(\frac{i}{2},0)$ and
$G_1(\frac{i}{2},1)$, and therefore defines four quadrants. By the definition of
$\Pcal_0$, for any point set $P\in \Pcal_0$, the two points in
$G_0(i,0)$ and $G_1(i+1,0)$ must either reside in the lower left 
 and upper right  quadrants or in the lower right and upper left
 quadrants. There are in total $n/2$ even $i$'s, so the number of the possible
 choices is $2^{n/2}$. See Figure~\ref{fig:P0}. Note that after determining
 which half the point in each column resides in, the problem is divided into two
 sub-problems: counting the number of possible ways to place $n/2$
 points in the upper grid and the lower grid. It is easy to show that
 each sub-problem is
 identical to the problem of counting the number of point sets in  
 $\Pcal_0(n/2)$, so we have the
 following recursion:
$$\abs{\Pcal_0(n)}=2^{\frac{n}{2}}\cdot \abs{\Pcal_0(n/2)}^2.$$
Solving this recursion with $\Pcal_0(1)=1$ yields that $\abs{\Pcal_0(n)}=2^{\frac{1}{2}n\log n}$.
\end{proof}

\begin{figure*}[!t]
\centering{
\includegraphics[width=.7\linewidth]{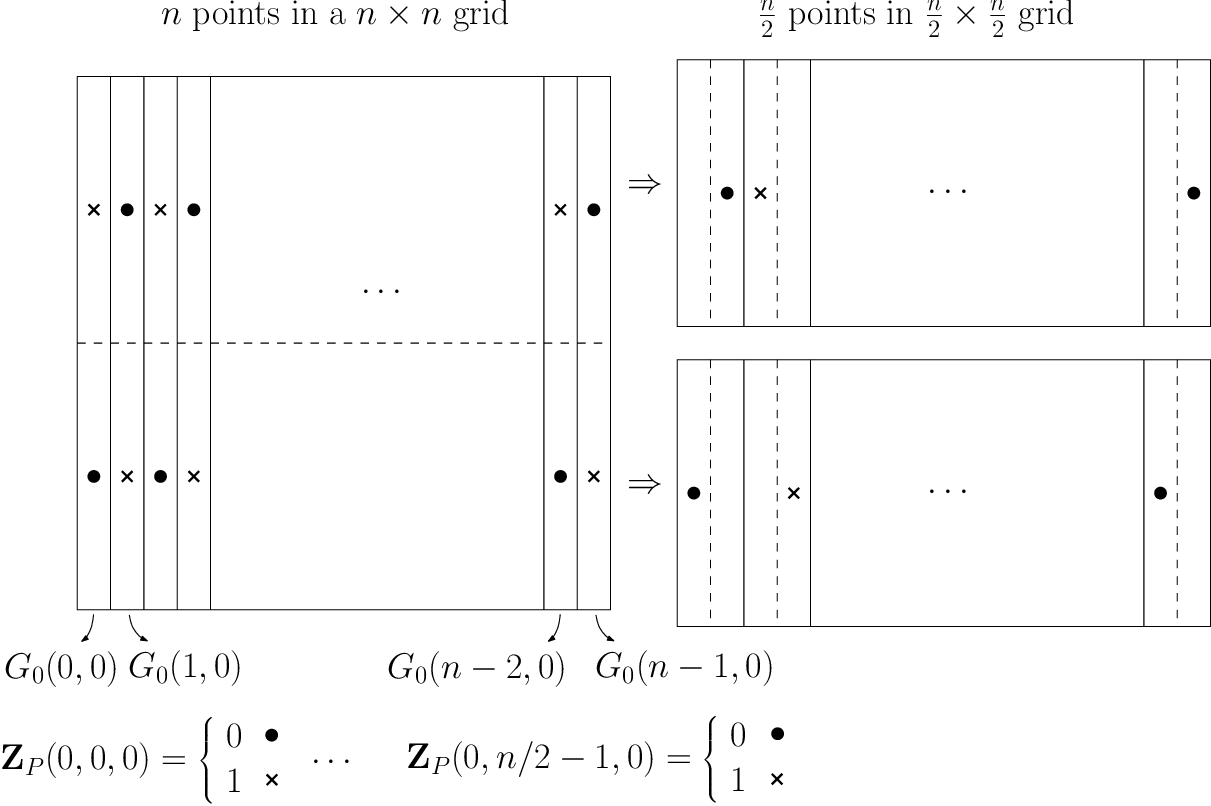}}
\caption{\label{fig:P0} Illustration of  the partition vector of $G_0$.}
\end{figure*}
 
A critical observation is that the proof of
 Lemma~\ref{lem:large_size} actually reveals
a bit vector representation for each of the point sets in
$\Pcal_0$, which will allow us to refine
 the collection
$\Pcal_0$. To see this, we define the {\em partition  vector}
$\Zbf_P$ for a point set $P\in \Pcal_0$ as follows.
 For any $(k,i,j)\in[\log n]\times [n/2^{k+1}]\times [2^k]$,
consider the $k$-canonical cells $G_k(2i,j)$ and $G_k(2i+1,j)$ and
$(k+1)$-canonical cells $G_{k+1}(i,2j)$ and
$G_{k+1}(i,2j+1)$. The two $k$-canonical cells overlap with the two
$(k+1)$-canonical cells, which defines four
quadrants. By the definition of binary nets, there are
two points in $P$  contained in these quadrants. 
We define $\Zbf_P(k,i,j)=0$ if the two points are in the lower left and
upper right quadrants and $\Zbf_P(k,i,j)=1$ if they are in the lower right
and upper left quadrants. See Figure~\ref{fig:P0}. We say the $k$-canonical cells
$G_k(2i,j)$ and $G_k(2i+1, j )$ is associated with bit
$\Zbf_P(k,i,j)$. Note that we use the triple $(k,i,j)$ as the index into
 $\Zbf_P$ 
for the ease of presentation; we can assume that the bits in $\Zbf_P$ are
 stored in for example the
lexicographic order of $(k,i,j)$.  Since the number of triples $(k,i,j)$ is 
$\frac{1}{2}n \log n$, the total number of bits in $\Zbf_P$ is
$\frac{1}{2}n \log n$. Let $\Zcal_0 = \{0,1\}^{\frac{1}{2}n \log n}$
denote  the set of all possible partition vector $\Zbf_P$'s. By the proof of
Lemma~\ref{lem:large_size},  there is a bijection between $\Zcal_0$
and $\Pcal_0$.

\subsection{Combinatorial discrepancy and corner volume distance}
Although we have proved that binary nets have large
combinatorial discrepancy, it does not yet lead us to
Theorem~\ref{thm:point_sets}.  
In this section, we will refine  $\Pcal_0$, the collection of all
binary nets, to derive a collection $\Pcal^*$,   such that
the union of any two point sets in $\Pcal^*$ has large combinatorial 
discrepancy. In order to characterize the combinatorial discrepancy of the union of
two point sets, we will need the following definition of {\em  corner
  volume distance}. 

\begin{Definition}
For two point sets $P_1, P_2 \in \Pcal_0$, the {\em corner volume distance}
of $P_1$ and $P_2$ is the summation of
$\abs{V_{P_1}(k,i,j)-V_{P_2}(k,i,j)}$, over all $(k,i,j)$. In
other words, let $\Delta(P_1, P_2)$ denote the corner volume distance of
$P_1$ and $P_2$, then
 $$\Delta(P_1, P_2)=\sum_{k=0}^{\log n}\sum_{i=0}^{n/2^k-1}\sum_{j=0}^{2^k-1}
\abs{V_{P_1}(k,i,j)-V_{P_2}(k,i,j)}. $$
\end{Definition}

The following lemma relates the combinatorial
discrepancy of the union of two point sets with their corner volume
distance:

\begin{lemma}
\label{lem:corner_volume_distance}
 Let $\Pcal^*$ be a subset of $\Pcal_0$. If there exists a constant $c$,
 such that for any two point sets $P_1, P_2 \in \Pcal_0$, 
 that their corner volume 
distance satisfies
$\Delta(P_1, P_2) \ge c \log n$, 
then $\disc(P_1\cup P_2, \Rcal_2) = \Omega(\log n)$.
\end{lemma}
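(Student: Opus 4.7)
The plan is to adapt the proof of Lemma~\ref{lem:corner} almost verbatim, using $|V_{P_1}(k,i,j)-V_{P_2}(k,i,j)|$ as the new per-cell quantity that plays the role of $V_P(k,i,j)$ in the single-set argument. Fix any coloring $\chi:P_1\cup P_2\to\{-1,+1\}$ and let $D_\chi(x,y)=\chi\bigl((P_1\cup P_2)\cap[0,x)\times[0,y)\bigr)$ be the associated signed counting function. Since $\max_{(x,y)}|D_\chi(x,y)|$ lower bounds $\disc(P_1\cup P_2,\Rcal_2)$ for every $\chi$, it suffices to show this maximum is $\Omega(\log n)$.

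Following the proof of Lemma~\ref{lem:corner}, I associate to each canonical cell $G_k(i,j)$ a 2D Haar wavelet $\psi_{k,i,j}$ supported on $G_k(i,j)$, equal to $+1$ on the SW and NE quarters of the cell and $-1$ on the NW and SE quarters (obtained by halving the cell in both coordinates). A short integration computation, carried out for a \emph{single} point $p$ inside $G_k(i,j)$, shows that $\int\mathbf 1_{[0,x)\times[0,y)}\,\psi_{k,i,j}(x,y)\,dx\,dy$ evaluated at $p$ equals $V_P(k,i,j)$, irrespective of which of the four quarters contains $p$. Summed over the two points $p_r\in P_r\cap G_k(i,j)$ of the union, this gives
$$\langle D_\chi,\psi_{k,i,j}\rangle\;=\;\chi(p_1)\,V_{P_1}(k,i,j)\;+\;\chi(p_2)\,V_{P_2}(k,i,j).$$
A one-line case check on the four sign patterns of $(\chi(p_1),\chi(p_2))$, using that the corner volumes are non-negative, then yields $|\langle D_\chi,\psi_{k,i,j}\rangle|\ge|V_{P_1}(k,i,j)-V_{P_2}(k,i,j)|$ for every pattern: the two same-sign cases give the dominating sum $V_{P_1}+V_{P_2}$, while the two opposite-sign cases give the difference $|V_{P_1}-V_{P_2}|$ outright.

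From this point the proof is structurally identical to that of Lemma~\ref{lem:corner}: the family $\{\psi_{k,i,j}\}$ is pairwise orthogonal (via the product-of-1D-Haars structure) with $\|\psi_{k,i,j}\|_2^2$ equal to the area $n$ of a canonical cell, and the authors' aggregation step (the part they defer to the appendix) converts the hypothesis $\sum_{k,i,j}|V_{P_1}(k,i,j)-V_{P_2}(k,i,j)|=\Delta(P_1,P_2)\ge cn^2\log n$ into $\|D_\chi\|_\infty=\Omega(\log n)$, exactly as it converts $S_P\ge cn^2\log n$ into $\disc(P,\Rcal_2)=\Omega(\log n)$. The only new ingredient relative to Lemma~\ref{lem:corner}, and the main obstacle in the argument, is the per-cell inequality $|\chi(p_1)V_{P_1}+\chi(p_2)V_{P_2}|\ge|V_{P_1}-V_{P_2}|$; this is precisely what motivates defining $\Delta(P_1,P_2)$ as an $\ell^1$ sum of absolute differences, since no other functional of $(V_{P_1},V_{P_2})$ survives a worst-case choice of $\chi$ when only a single Haar wavelet per cell is available.
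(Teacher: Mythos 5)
Your overall strategy is the paper's: rerun the Roth orthogonal-function/Riesz-product argument of Lemma~\ref{lem:corner} on $P_1\cup P_2$, with $\abs{V_{P_1}(k,i,j)-V_{P_2}(k,i,j)}$ playing the role of $V_P(k,i,j)$, and your per-cell computation $\langle D_\chi,\psi_{k,i,j}\rangle=\chi(p_1)V_{P_1}(k,i,j)+\chi(p_2)V_{P_2}(k,i,j)$ together with the four-case inequality $\abs{\chi(p_1)V_{P_1}+\chi(p_2)V_{P_2}}\ge\abs{V_{P_1}-V_{P_2}}$ is exactly the paper's key step. However, there is a gap in how you hand the result to the ``aggregation step.'' What the Riesz-product machinery of Lemma~\ref{lem:corner} actually needs is that the \emph{first-order term} $\sum_k\int f_k D$ be bounded below by $\Delta(P_1,P_2)/n^2$, i.e.\ that the per-cell contributions add up with consistent sign. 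In Lemma~\ref{lem:corner} this is automatic because $f_k$ is defined with the $\chi$-dependent sign $\chi(q)$ on each cell, so every cell contributes $+V_P(k,i,j)/n^2\ge 0$. Your $\psi_{k,i,j}$ has a fixed sign, so $\langle D_\chi,\psi_{k,i,j}\rangle$ can be positive on some cells and negative on others, and a lower bound on each $\abs{\langle D_\chi,\psi_{k,i,j}\rangle}$ does not prevent cancellation in $\sum_{i,j}\langle D_\chi,\psi_{k,i,j}\rangle$. The fix is small but it is precisely the new ingredient in the paper's proof: choose the sign $C$ of the wavelet on each cell as a function of $\chi$ and of which corner volume is larger (the paper sets $C=\chi(q_1)$ if $V_{P_1}\ge V_{P_2}$ and $C=\chi(q_2)$ otherwise), so that every per-cell integral equals either $(V_{P_1}+V_{P_2})/n^2$ or $\abs{V_{P_1}-V_{P_2}}/n^2$, hence is nonnegative and at least $\abs{V_{P_1}-V_{P_2}}/n^2$, while $f_k$ remains $(k,\log n-k)$-checkered so the higher-order terms still vanish or stay small.

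Relatedly, your appeal to pairwise orthogonality and $\lVert\psi_{k,i,j}\rVert_2^2=n$ points toward a Parseval/$L^2$ aggregation, which is not what the paper does and would not suffice: combining $\lVert D\rVert_2^2\ge\frac{1}{n}\sum\abs{\langle D,\psi\rangle}^2$ with Cauchy--Schwarz and $\Delta\ge cn^2\log n$ only yields $\lVert D\rVert_\infty=\Omega(\sqrt{\log n})$. The full $\Omega(\log n)$ comes from the Riesz product $G=-1+\prod_k(\gamma f_k+1)$ and the checkered-function product lemma, which is exactly why the sign alignment above cannot be skipped.
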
  

\begin{proof}
The proof follows the same framework as the proof for
Lemma~\ref{lem:corner}. 
Note that  there are exactly two points of $P_1\cup P_2$
in each $k$-canonical cell
$G_k(i,j)$, and we use $q_1, q_2$ denote the two points from $P_1$ and
$P_2$, respectively. 
We will set $f_k(x)=C$ for quadrants $G_k(i,j)_{UR}$
and $G_k(i,j)_{LL}$ and $f_k(x)=-C$ for the other two quadrants, where $C$ is determined as
follows:
$$C=\left\{
     \begin{array}{lr}
       \chi(q_1) & \textrm{if } V_{P_1}(k,i,j) \ge V_{P_2}(k,i,j); \\
 \chi(q_2) & \textrm{if } V_{P_1}(k,i,j) < V_{P_2}(k,i,j).
     \end{array}
   \right. $$
Let $D(x)$ be the combinatorial discrepancy at $x$ over $P_1\cup P_2$. By equation~\eqref{eqn:first_part}
in the proof of Lemma~\ref{lem:corner}, we
get
$$\int_{G_k(i,j)} f_kD=\left\{
   \begin{array}{lr}
        (V_{P_1}(k,i,j) +
       V_{P_2}(k,i,j)) & \textrm{if }
        \chi(q_1) = \chi(q_2); \\
\abs{V_{P_1}(k,i,j) - V_{P_2}(k,i,j)}& 
\textrm{if } \chi(q_1) \neq \chi(q_2). 
     \end{array}
   \right. $$
In either case, 
$$ \int_{G_k(i,j)} f_kD
\ge \abs{ V_{P_1}(k,i,j)-V_{P_2}(k,i,j) }.$$
And the rest of the proof follows the same argument in the proof of
Lemma~\ref{lem:corner}. 
\end{proof}

Here we briefly explain the high level idea for proving
Theorem~\ref{thm:point_sets}.  By Lemma~\ref{lem:corner_volume_distance},
it is sufficient to find a sub-collection $\Pcal^* \subseteq \Pcal_0$, such
that for any two point sets in $\Pcal^*$, their corner volume distance is
large. We will choose a subset $\Zcal_1 \subseteq \Zcal_0$, and project
each vector in $\Zcal_1$ down to a slightly shorter bit vector $\Tbf$. The
collection $\Tcal$ of all resulted bit vector $\Tbf$'s induces a
sub-collection $\Pcal_1 \subseteq \Pcal_0$, and each $\Tbf$ represents a
point set in $\Pcal_1$. Then we prove that for any two point sets $P_1, P_2
\in \Pcal_1$, there is a linear dependence between the corner volume
distance $\Delta(P_1, P_2)$ and the Hamming distance of their bit vector
representations $\Tbf_{P_1}$ and $\Tbf_{P_2}$. Finally, we show that there
is a large sub-collection of $\Tcal$ with large pair-wise Hamming
distances, and this sub-collection induces a collection of point sets
$\Pcal^*\in \Pcal_1$ in which the union of any two point sets has large
combinatorial discrepancy.

We focus on  an $(k+6, \log n-k)$-cell $G_{k+6,\log
n -k}(i,j)$, for  $k\in \{0,6, 12, \ldots, \log n-6 \}$.  Note that  $G_{k+6,\log
n -k}(i,j)$ only contains $(k+l)$-canonical cells
 for $l\in [7]$.  Let
$F_{k,i,j}(l)$ denote the set of all $(k+l)$-canonical cells in
$G_{k+6, \log n -k}(i,j)$, which can be listed as
$$F_{k,i,j}(l)=\{G_{k+l}(2^{6-l}i+s, 2^{l}j+t) \mid s\in [2^{6-l}],
t \in [2^{l}]\}.$$
Note that $\abs{F_{k,i,j}(l)}=64$ for each $l\in [7]$.
Let $Z_{k,i,j}(l)$ denote the set of indices of  bits in the partition
vector that are associated with the
some $(k+l)$-canonical cells in $G_{k+6,\log n-k}(i,j)$, for $l\in [6]$, i.e.,
$$Z_{k,i,j}(l)=\{(k+l, 2^{5-l}i+s, 2^lj+t) \mid
s\in [2^{5-l}], t \in [2^{l}]\}.$$
Define $Z_{k,i,j}$ to be the union of the $Z_{k,i,j}(l)$'s. Since
there are $32$ bits in $Z_{k,i,j}(l)$ for each $l \in [6]$, the
total number of bits in $Z_{k,i,j}$ 
is $192$ (here we use the indices in
$Z_{k,i,j}$ to denote their corresponding bits in the partition vector of
$P$, with a slightly abuse of notation).  The following fact shows the
$Z_{k,i,j}$'s partition all the
$\frac{1}{2}n\log n$ bits:

\begin{fact}
\label{fact:subset}
The number of  $Z_{k,i,j}$'s is $\frac{1}{384}n\log
n$; For different $(k,i,j)$ and $(k',i',j')$, $Z_{k,i,j}\cap Z_{k',i',j'}=\emptyset$.
\end{fact}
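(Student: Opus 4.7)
The plan is to verify this fact by directly enumerating the index blocks, since the statement is a bookkeeping claim about how the $\frac{1}{2}n\log n$ bits of $\Zbf_P$ group into the $Z_{k,i,j}$'s.

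For the cardinality, I would first recall that $k$ ranges over the arithmetic progression $\{0,6,12,\ldots,\log n-6\}$ of length $\frac{\log n}{6}$, and that for each such $k$ the position $(i,j)$ of a $(k+6,\log n-k)$-cell lies in $[n/2^{k+6}]\times[2^k]$, contributing $n/64$ positions. Multiplying gives the claimed total $\frac{n\log n}{384}$.

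For the disjointness, the key observation is that every index triple appearing in $Z_{k,i,j}$ has level coordinate in $\{k,k+1,\ldots,k+5\}$. Since the allowable values of $k$ are spaced $6$ apart, any two blocks with $k\ne k'$ already use disjoint level ranges, so no common triple is possible. Given $k=k'$ but $(i,j)\ne (i',j')$, I would fix any $l\in[6]$ and observe that the horizontal indices in $Z_{k,i,j}(l)$ form the half-open interval $[2^{5-l}i,\,2^{5-l}(i+1))$ and the vertical indices form $[2^l j,\,2^l(j+1))$. These are dyadic intervals aligned to multiples of their own lengths, so distinct $i$'s (resp.\ distinct $j$'s) produce disjoint horizontal (resp.\ vertical) ranges; hence $Z_{k,i,j}(l)\cap Z_{k,i',j'}(l)=\emptyset$, and taking the union over $l$ yields the claim.

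There is no real obstacle here; the step that most deserves care is the alignment argument for the dyadic intervals. As a sanity check at the end I would verify that $\frac{n\log n}{384}\cdot 192 = \frac{1}{2}n\log n$, which matches the total length of $\Zbf_P$ given by Lemma~\ref{lem:large_size} and confirms that the $Z_{k,i,j}$'s actually partition all bits of the partition vector, not merely form a pairwise disjoint family.
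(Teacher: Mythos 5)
Your proof is correct and follows essentially the same approach as the paper: the cardinality is counted the same way ($\log n /6$ values of $k$, each with $n/64$ positions), and the disjointness is handled by the same two-case split on $k=k'$ versus $k\neq k'$, with the $k\neq k'$ case resolved by the level ranges $\{k,\ldots,k+5\}$ being separated by the spacing of $6$. The only difference is cosmetic: for the $k=k'$ case the paper argues geometrically that the two $(k+6,\log n-k)$-cells are disjoint rectangles, while you read the disjointness off directly from the dyadic-interval index formulas, which amounts to the same thing; your concluding sanity check $\frac{n\log n}{384}\cdot 192 = \frac{1}{2}n\log n$ is a nice addition confirming the blocks actually partition all of $\Zbf_P$.
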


The proof of the above claims are fairly straightforward: 
The number of different $Z_{k,i,j}$'s is equal to the number of
different $G_{k+6,\log n -k}(i,j)$'s. For a fixed $k$, the number of
different $(k+6, \log n -k)$-cells is $n/64$, and the number of
different $k$'s is $\log n /6$, so the total number of different
$Z_{k,i,j}$'s is $\frac{1}{384}n \log n$. For the second claim, we
consider the following two cases:  If $k=k'$, we have$(i,j) \neq
(i',j')$. This implies that the two  $(k,\log n-k+6)$-cells are
disjoint, therefore the bits associated with the canonical cells
inside them are disjoint.
For $k\neq k'$,  observe that we choose $k$ and $k'$ from $\{0,6,
\ldots, \log n-6\}$, and $Z_{k,i,j}$ and $\Zbf_{k',i',j'}$ only contain bits associated with
$(k+l)$-canonical cells and $(k'+l')$-canonical cells, respectively, for $l,l'\in [6]$,
so  $Z_{k,i,j}(l)$ and $Z_{k'i'j'}(l')$ are disjoint, for
$l,l'\in [6]$.

The reason we group the bits in the partition vector  into
small subsets is that  we can view each subset 
$Z_{k,i,j}$ as a partition  vector of the cell $G_{k+6,
   \log n -k}(i,j)$,  which  allows us to manipulate the positions of the points
inside it.  More precisely, we can view
$G_{k+6,\log n -k}(i,j )$  as a $64\times 64$ grid, with 
each grid cell being a $(k,\log n -k-6)$-cell in the original $[0,1)^2$
grid. Moreover, 
a $(k+l)$-canonical cell contained in $G_{k +6 ,\log n - k}(i,j)$ corresponds
to a $l$-canonical cell in the $64\times 64$ grid. Note that
there are $64$ points in this grid, and the bits in $Z_{k,i,j}$ correspond to
the partition vector of this $64$-point set.
Now consider a $(k+3)$-canonical cell $G_{k+3}(8i,8j)$, which 
corresponds  to the lower
left $8\times 8$ grid in $G_{k+6,\log n -k}(i,j )$.  For each point set $P
\in \Pcal_0$,  there is exactly one
point in $G_{k+3}(8i,8j)$, and the bits in $Z_{k,i,j}$ encode the position
of the point on the $8\times 8$ grid. Suppose $s_1$ and $s_2$ are
two bit vectors of length $192$, such that when the bits in $Z_{k,i,j}$ are
assigned as $s_1$ (denoted $Z_{k,i,j}=s_1$), the
point in $G_{k+3}(8i,8j)$ resides in the upper left grid cell,; and when
$Z_{k,i,j} =s_2$, it resides in the grid cell
to the upper left of the center of $G_{k+3}(8i,8j)$ (see
Figure~\ref{fig:partition}). Note that by this
definition, the corner volume distance of this two point is at least
$n/8$.  Meanwhile, since there are no constraints on
the other $63$ points in $G_{k+6,\log
  n -k}(i,j )$, it is easy to show that such assignments $s_1$ and
$s_2$ indeed exist.  
  
\begin{figure*}[!t] 
\centering{
\includegraphics[width=.7\linewidth]{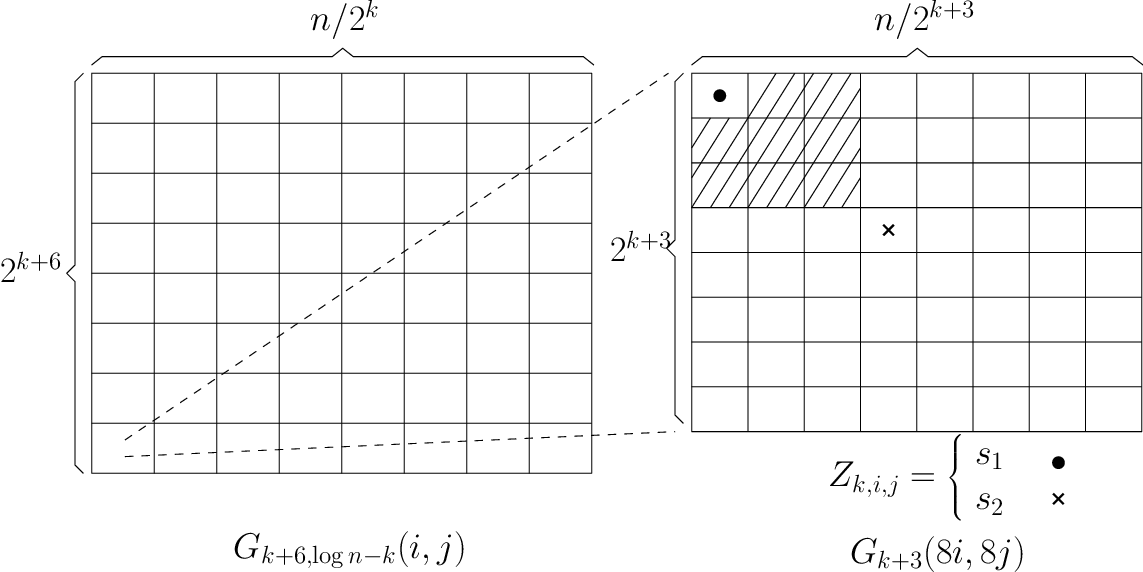}}
\caption{\label{fig:partition} Illustration of the $64\times 64$
  grid. The volume of each cell in $G_{k+3}(8i,8j)$ is $n/64$. The
  cells in shadow represent the corner volume difference of $s_1$ and $s_2$.}
\end{figure*}

By restricting the assignments of $Z_{k,i,j}$ to
$\{s_1, s_2\}$, we have created a subset $\Zcal_1$ of $\Zcal_0=\{0,1\}
^{\frac{1}{2}n \log n}$:
$$\Zcal_1=\{\Zbf \mid Z_{k,i,j}=s_1 \textrm{ or } s_2, k \in \{0, 6,
\ldots, \log n -6\}, i \in [n/2^{k+6}], j \in
  [2^k] \}.$$ 
Let $\Pcal_1$ denote the  sub-collection of
 $\Pcal_0$ that $\Zcal_1$ encode.
 By Fact~\ref{fact:subset}, the number of  $Z_{k,i,j}$'s is $\frac{1}{384} n
 \log n$, so $\abs{\Pcal_1}=2^{\frac{1}{384}n \log n}$.
Define a bit vector $\Tbf$ of length $\frac{1}{384} n \log n$, such
that $\Tbf(k,i,j)=0$ if $Z_{k,i,j}=s_1$ and $\Tbf(k,i,j)=1$ if
$\Zbf_{k,i,,j}=s_2$, then a bit vector $\Tbf$
encodes a bit vector $\Zbf \in \Zcal_1$,  and therefore encodes a
point set in $\Pcal_1$. 
Let $\Tcal=\{0,1\}^{\frac{1}{384}n \log n}$ denote the collection of all
  bit vectors $\Tbf$.  Then there is a bijection between $\Tcal$ and
  $\Pcal_1$, and 
$\abs{\Tcal}=\abs{\Pcal_1}= 2^{\frac{1}{384} n \log n}$.

Consider two point sets $P_1$ and $P_2$ in
$\Pcal_1$. Let $\Tbf_{P_1}$ and $\Tbf_{P_2}$ denote the bit vector
that encode these two point sets, respectively. The following lemma
relates the corner volume distance of $P_1$ and $P_2$ with the Hamming distance
between $\Tbf_{P_1}$ and $\Tbf_{P_2}$.

\begin{lemma}
\label{lem:hamming}
Suppose there exists a constant $c$, such that 
for any $P_1, P_2 \in \Pcal_1$, the Hamming distance
$H(\Tbf_{P_1}, \Tbf_{P_2}) \ge c n \log n$, then the corner  volume
distance between $P_1$ and $P_2$,
$\Delta(P_1, P_2)$, is  $\Omega(n^2\log n)$.
\end{lemma}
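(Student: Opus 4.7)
The plan is to prove the stronger pointwise inequality $\Delta(P_1, P_2) \ge (n/8)\cdot H(\Tbf_{P_1}, \Tbf_{P_2})$; combined with the hypothesis $H(\Tbf_{P_1}, \Tbf_{P_2}) \ge cn\log n$, this immediately yields $\Delta(P_1, P_2) \ge (c/8)n^2\log n = \Omega(n^2\log n)$. The argument proceeds in three steps: localize each bit of $\Tbf$ to a specific $(k+3)$-canonical cell, lower bound by $n/8$ the contribution to $\Delta$ coming from that cell whenever the bit differs, and observe that distinct bits pick out pairwise distinct summands of $\Delta$.

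First, I would fix a bit index $(k, i, j)$ of $\Tbf$ and focus on the single $(k+3)$-canonical cell $G_{k+3}(8i, 8j)$. By construction, the value of $\Tbf_P(k, i, j)$ selects between $Z_{k, i, j} = s_1$ and $Z_{k, i, j} = s_2$; these two choices of $Z_{k,i,j}$ differ only in the position of the unique point of $P \cap G_{k+3}(8i, 8j)$, and leave the other $63$ points inside $G_{k+6, \log n-k}(i, j)$ in identical positions. So whenever $\Tbf_{P_1}$ and $\Tbf_{P_2}$ agree on bit $(k,i,j)$, the summand $|V_{P_1}(k+3, 8i, 8j) - V_{P_2}(k+3, 8i, 8j)|$ of $\Delta$ is zero and contributes nothing.

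Second, I would compute the two corner volumes explicitly when the bit flips. View $G_{k+3}(8i, 8j)$ as the square $[0, 8] \times [0, 8]$ of normalized area $64$ (one unit of normalized area equals $n/64$ of the true area). Under one assignment the point sits in the upper-left subgrid cell with center $(0.5, 7.5)$, whose nearest corner is $(0, 8)$, yielding rectangle area $0.5 \cdot 0.5 = 0.25$; under the other it sits in the subgrid cell just above and to the left of the center, with center $(3.5, 4.5)$, whose nearest corner is still $(0, 8)$, yielding rectangle area $3.5 \cdot 3.5 = 12.25$. Converting back to true area,
\[
\bigl|V_{P_1}(k+3, 8i, 8j) - V_{P_2}(k+3, 8i, 8j)\bigr| \;=\; \frac{12.25 - 0.25}{64}\, n \;=\; \frac{3n}{16} \;\ge\; \frac{n}{8},
\]
matching the $n/8$ bound already asserted in the construction of $s_1$ and $s_2$ preceding the statement of the lemma.

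Third, I would check that the per-bit contributions live in pairwise distinct summands of the triple sum defining $\Delta$. By Fact~\ref{fact:subset} distinct bits of $\Tbf$ correspond to distinct triples $(k, i, j)$, and the map $(k, i, j) \mapsto (k+3, 8i, 8j)$ is injective: distinct values of $k \in \{0, 6, \ldots, \log n - 6\}$ produce distinct levels $k+3$, and for a fixed $k$ distinct $(i, j)$ produce distinct positions $(8i, 8j)$. Discarding all other nonnegative summands of $\Delta$ (including any further contributions produced at other levels by the relocated point), we obtain
\[
\Delta(P_1, P_2) \;\ge\; \frac{n}{8}\cdot H(\Tbf_{P_1}, \Tbf_{P_2}) \;\ge\; \frac{c}{8}\, n^2\log n \;=\; \Omega(n^2\log n),
\]
as required. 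I do not anticipate any real obstacle here: the only nontrivial computation is the geometric corner-volume arithmetic in step two, which is mechanical on an $8\times 8$ grid, and the disjointness bookkeeping in step three is handed to us directly by Fact~\ref{fact:subset}.
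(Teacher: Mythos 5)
Your proposal takes essentially the same route as the paper's proof: restrict the triple sum defining $\Delta$ to the summands indexed by $(k+3,8i,8j)$, observe that a differing bit of $\Tbf$ forces the corresponding corner-volume summand to be at least $n/8$, multiply by the hamming distance, and discard the remaining nonnegative terms. Your step three (injectivity of $(k,i,j)\mapsto(k+3,8i,8j)$) is a correct and worthwhile sanity check that the paper leaves implicit.

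Two small remarks on rigor, neither of which affects the conclusion. First, the claim in your step one that the summand $\abs{V_{P_1}(k+3,8i,8j)-V_{P_2}(k+3,8i,8j)}$ vanishes whenever $\Tbf_{P_1}(k,i,j)=\Tbf_{P_2}(k,i,j)$ is not justified: the bits in $Z_{k,i,j}$ only pin down which $(k,\log n-k-6)$-cell the point lies in, not its exact position on the $n\times n$ grid, which also depends on bits outside $Z_{k,i,j}$. It is also not asserted in the paper that $s_1$ and $s_2$ leave the other $63$ points in identical positions. Fortunately none of this is needed, since you discard those summands anyway. Second, in step two you evaluate the two corner volumes at the centers of the two $(k,\log n-k-6)$-cells, obtaining $3n/16$; but the point may sit anywhere in its cell, so a worst-case argument is what is actually required. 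That argument still works: the first rectangle has side lengths in $[0,1)\times(0,1]$ (normalized), hence area less than $1$, while the second has side lengths in $[3,4)\times(3,4]$, hence area greater than $9$, so the difference exceeds $8$ normalized units, i.e.\ $n/8$. With these two clarifications your proof matches the paper's.
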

\begin{proof}
We make the following relaxation on $\Delta(P_1, P_2)$ :

\begin{align*}
\Delta(P_1, P_2) =&\sum_{k=0}^{\log n}\sum_{i=0}^{n/2^k-1}\sum_{j=0}^{2^k-1}
\abs{V_{P_1}(k,i,j)-V_{P_1}(k,i,j)} \\
\ge& \sum_{k\in\{0,6,\ldots, \log n-6\}}\sum_{i=0}^{n/2^{k+6}-1}\sum_{j=0}^{2^{k}-1}
\abs{V_{P_1}(k+3,8i,8j)-V_{P_1}(k+3,8i,8j)}.
\end{align*}

Now consider the bits  $\Tbf_{P_1}(k,i,j)$ and 
$\Tbf_{P_2}(k,i,j)$. If $\Tbf_{P_1}(k,i,j) \neq
\Tbf_{P_2}(k,i,j)$, then by the choice of $s_1$ and $s_2$ we have 
$\abs{V_{P_1}(k+3,8i,8j-V_{P_2}(k+3,8i,8j)}\ge
n/8$. So the corner volume distance $\Delta(P_1, P_2)$ is lower bounded by
the Hamming distance $H(\Tbf_{P_1}, \Tbf_{P_2})$ multiplied by  $n/8$,
and the lemma follows.
\end{proof}

The following lemma (probably folklore; we provide a proof here for completeness) states
that there is a large subset of $\Tcal$, in which the vectors are well separated in terms
of Hamming distance.

\begin{lemma}
\label{lem:large_hamming}
Let $N=\frac{1}{384}n\log n$. There is a subset $\Tcal^* \subseteq \Tcal=\{0,1\}^N$ of size
$2^{\frac{1}{16}N}$, such that for any $\Tbf_1\neq \Tbf_2 \in \Tcal^*$,
the Hamming distance $H(\Tbf_1, \Tbf_2)\ge \frac{1}{4} N$.
\end{lemma}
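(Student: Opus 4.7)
The plan is to use a standard Gilbert--Varshamov type greedy packing argument. I would construct $\Tcal^*$ iteratively: start with any vector (say the all-zeros vector) in $\Tcal^*$, and then repeatedly add to $\Tcal^*$ any vector $\Tbf \in \{0,1\}^N$ whose Hamming distance to every vector already in $\Tcal^*$ is at least $N/4$. This process terminates only when every vector in $\{0,1\}^N$ lies within Hamming distance $N/4 - 1$ of some already-chosen vector. So if $|\Tcal^*| = M$ at termination, then the $M$ Hamming balls of radius $N/4 - 1$ centered at the chosen vectors cover $\{0,1\}^N$, giving
\[
M \cdot \sum_{i=0}^{N/4-1}\binom{N}{i} \;\ge\; 2^N.
\]

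The crux is therefore to upper-bound the volume of a Hamming ball of radius $N/4$. I would use the standard entropy estimate
\[
\sum_{i=0}^{\alpha N}\binom{N}{i} \;\le\; 2^{H(\alpha) N}, \qquad \text{where } H(\alpha) = -\alpha\log\alpha - (1-\alpha)\log(1-\alpha),
\]
valid for $\alpha \le 1/2$. Plugging in $\alpha = 1/4$, a direct calculation gives
\[
H(1/4) \;=\; \tfrac{1}{4}\log 4 + \tfrac{3}{4}\log\tfrac{4}{3} \;<\; \tfrac{15}{16},
\]
so the ball volume is at most $2^{15 N/16}$. Combining with the covering inequality above,
\[
M \;\ge\; 2^N / 2^{15N/16} \;=\; 2^{N/16},
\]
as required.

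I expect no real obstacle here; the only thing to verify carefully is the inequality $H(1/4) < 15/16$, which follows from a short numerical check (indeed $H(1/4) \approx 0.811$). Everything else is the textbook greedy/volume argument for Gilbert--Varshamov codes, and $N = \tfrac{1}{384}n\log n$ plays no special role beyond being a positive integer.
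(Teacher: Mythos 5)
Your proof is correct, and at its core it is the same Gilbert--Varshamov-type volume argument the paper uses: both proofs reduce to showing that the number of vectors within Hamming distance $N/4$ of a fixed vector is at most $2^{\frac{15}{16}N}$, and then extracting $2^{N}/2^{\frac{15}{16}N}=2^{\frac{1}{16}N}$ well-separated vectors. The only real difference is packaging: you bound the ball volume deterministically via the entropy estimate $\sum_{i\le N/4}\binom{N}{i}\le 2^{H(1/4)N}$ with $H(1/4)\approx 0.811<\frac{15}{16}$, and extract the code by greedy covering, whereas the paper bounds the same quantity probabilistically by a Chernoff bound on the Binomial$(N,1/2)$ distance between a fixed and a uniformly random vector, and phrases the extraction as finding an independent set in a graph of maximum degree $2^{\frac{15}{16}N}$. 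Your route is marginally tighter (the entropy bound gives room to spare, while the paper's $e^{-N/16}\le 2^{-N/16}$ step is what fixes the constant $\frac{1}{16}$), but both arguments are standard and both deliver exactly the claimed parameters; as you note, the specific value $N=\frac{1}{384}n\log n$ is irrelevant.
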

\begin{proof}
We embed $\Tcal$ into a graph $(V, E)$.  Each node in $V$ represents a
vector $\Tbf \in \Tcal$ , and there is edge between two nodes $\Tbf_1$
and $\Tbf_2$ if and only if $H(\Tbf_1, \Tbf_2)< \frac{1}{4}N$. By this
embedding, it is equivalent to prove that there is an independent set of
size $2^{\frac{1}{16}N}$ in $(V,E)$. 

Fix a vector $\Tbf \in \Tcal$, and consider a random vector $\Tbf'$ uniformly
drawn from $\Tcal$. It is easy to see that the Hamming distance
$H(\Tbf, \Tbf')$ follows binomial distribution. By Chernoff bound
$$\Pr[H(\Tbf,\Tbf') < \frac{1}{4}N] \le e^{-\frac{1}{16} N} \le 2^{-\frac{1}{16} N} .$$
This implies that the probability that there is an edge between $\Tbf$
and $\Tbf'$ is at most $2^{-\frac{1}{16} N}$. By the fact that $\Tbf'$
is uniformly chosen from $\Tcal$, it follows that the degree of $\Tbf$ is at
most $d=2^N\cdot 2^{-\frac{1}{16} N}=2^{\frac{15}{16}N}$. Since a graph with maximum degree $d$ must
have an independent set of size at least $\abs{V}/d$, 
there must be an independent set of size at least  $2^{\frac{1}{16}N}$.
\end{proof}

Let $\Pcal^*$ denote the collection of point sets encoded by
$\Tcal^*$. By Lemma~\ref{lem:large_hamming}, $\abs{\Pcal^*} \ge
2^{\frac{1}{16}N} = 2^{\frac{1}{6144} n\log n}$. From
Lemma~\ref{lem:corner_volume_distance} and~\ref{lem:hamming} we
know that for any two point sets $P_1\neq P_2 \in \Pcal^*$, the
combinatorial discrepancy of the union of  $P_1$ and $P_2$ is
$\Omega(\log n)$. This completes the proof of Theorem~\ref{thm:point_sets}.







\end{document}